\documentclass[journal]{IEEEtran}
\usepackage{cite}
\usepackage{url}
\usepackage{amsmath}
\allowdisplaybreaks
\usepackage{amssymb}
\usepackage{enumitem}
\usepackage{bm}
\usepackage{algorithm,algorithmic}
\usepackage{mathtools}
\mathtoolsset{showonlyrefs}
\usepackage[caption=false,font=footnotesize,subrefformat=parens,labelformat=parens]{subfig}
\usepackage{refcount}
\usepackage{color}\definecolor{RED}{rgb}{1,0,0}\definecolor{BLUE}{rgb}{0,0,1}
\DeclareMathOperator*{\minimize}{minimize}
\DeclareMathOperator*{\maximize}{maximize}
\DeclareMathOperator*{\st}{subject\ to}
\DeclareMathOperator*\argmin{\text{argmin}}
\DeclareMathOperator*\argmax{\text{argmax}}

\def\ra{\rightarrow}

\def\ln{\text{ln}}

\newcommand{\mathletter}[1]{%
	\expandafter\newcommand\csname b#1\endcsname{\mathbb #1}
	\expandafter\newcommand\csname c#1\endcsname{\mathcal #1}
	\expandafter\newcommand\csname f#1\endcsname{\mathfrak #1}
	\expandafter\newcommand\csname til#1\endcsname{\widetilde #1}
	\expandafter\newcommand\csname ha#1\endcsname{\widehat #1}
	\expandafter\newcommand\csname bf#1\endcsname{\bf #1}
	\expandafter\newcommand\csname s#1\endcsname{\mathsf #1}
}%
\def\mathletters#1{\mathlettersB #1,,}
\def\mathlettersB#1,{\ifx,#1,\else\mathletter #1\expandafter\mathlettersB\fi}
\mathletters{A,B,C,D,E,F,G,H,I,J,K,L,M,N,O,P,Q,R,S,T,U,V,W,X,Y,Z}

\newcommand{\mathletterl}[1]{%
	\expandafter\providecommand\csname v#1\endcsname{\vec{#1}}
}%
\def\mathlettersl#1{\mathlettersC #1,,}
\def\mathlettersC#1,{\ifx,#1,\else\mathletterl #1\expandafter\mathlettersC\fi}
\mathlettersl{a,b,c,d,e,f,g,h,i,j,k,l,m,n,o,p,q,r,s,t,u,v,w,x,y,z}

\def \qed {\hfill \vrule height6pt width 6pt depth 0pt}
\def\bea{\begin{equation}\begin{alignedat}{-1}}
\def\ena{\end{alignedat}\end{equation}}
\def\bee{\begin{equation}}
\def\ene{\end{equation}}

\renewcommand{\vec}[1]{\mathbf{#1}}

\newtheorem{theo}{Theorem}
\newtheorem{lemma}{Lemma}
\newtheorem{assum}{Assumption}

\newtheorem{remark}{Remark}

\newenvironment{proof}{\begin{IEEEproof}}{\end{IEEEproof}}

\def\T{\mathsf{T}}

\def\bone{{\mathbf{1}}}

\renewcommand{\L}{L}
\begin{document}

\title{Distributed Dual Gradient Tracking for Resource Allocation in Unbalanced Networks}
\author{Jiaqi~Zhang,  Keyou~You, \IEEEmembership{Senior Member,~IEEE}, and Kai~Cai, \IEEEmembership{Senior Member,~IEEE}
	\thanks{* This work was  supported by the National Natural Science Foundation of China under Grant 61722308 and Dong Guan Innovative Research Team Program under Grant  2018607202007. ({\em Corresponding author: Keyou You}).}
	\thanks{J. Zhang and K. You are with the Department of Automation, and BNRist, Tsinghua University, Beijing 100084, China. E-mail: zjq16@mails.tsinghua.edu.cn, youky@tsinghua.edu.cn.}
	\thanks{K. Cai is with Department of Electrical and Information Engineering, Osaka City University, Osaka 558-8585, Japan. E-mail: kai.cai@eng.osaka-cu.ac.jp.}
}

\maketitle

\IEEEpeerreviewmaketitle

\begin{abstract}
	This paper proposes a distributed dual gradient tracking algorithm (DDGT) to solve resource allocation problems over an unbalanced network, where each node in the network holds a private cost function and computes the optimal resource by interacting only with its neighboring nodes. Our key idea is the novel use of the distributed push-pull gradient  algorithm (PPG) to solve the dual problem of the resource allocation problem. To study the convergence of the DDGT, we first establish the sublinear convergence rate of PPG for {\em non-convex} objective functions, which advances the existing results on PPG as they require the strong-convexity of objective functions.  Then we show that  the DDGT converges linearly for strongly convex and Lipschitz smooth cost functions, and sublinearly without the Lipschitz smoothness. Finally, experimental results suggest that DDGT outperforms existing algorithms.
\end{abstract}

\begin{IEEEkeywords}
	distributed resource allocation, unbalanced graphs, dual problem, distributed optimization, push-pull gradient.
\end{IEEEkeywords}

\section{Introduction}
Distributed resource allocation problems  (DRAPs) are concerned with optimally allocating resources to multiple nodes that are connected via a directed peer-to-peer network. Each node is associated with a local private objective function to measure the cost of its allocated resource, and the global goal is to jointly minimize the total cost. The key feature of the DRAPs is that each node computes its optimal amount of resources by interacting {\em only} with its neighboring nodes in the network.  A typical application is the economic dispatch problem, where the local cost function is often quadratic \cite{yang2013consensus}. See \cite{ho1980class,xiao2006optimal,lakshmanan2008decentralized,zhao2018privacy} for other applications.

\subsection{Literature review}
Existing works on DRAPs can be categorized depending on whether the underlying network is {\em balanced} or not. A balanced network means that the ``amount" of information \emph{to} any node is equal to that \emph{from} this node, which is crucial to the  algorithm design. Most of early works on DRAPs focus on balanced networks and the recent interest is shifted to the unbalanced case.

The central-free algorithm (CFA) in \cite{ho1980class} is the first documented result on DRAPs in balanced networks where  at each iteration every node updates its decision variables using the weighted error between the gradient of its local objective function and those of its neighbors, and it can be accelerated  by designing an optimal weighting matrix \cite{xiao2006optimal}. It is proved that the CFA achieves a linear convergence rate for strongly convex and Lipschitz smooth cost functions. For time-varying networks,  the CFA is shown to converge sublinearly in the absence of strong convexity \cite{lakshmanan2008decentralized}. This rate is further improved in \cite{doan2017distributed} by optimizing its dependence on the number of nodes. In addition, there are also several ADMM-based methods that only work for balanced networks \cite{chang2014multi,chang2016proximal,aybat2016distributed}. By exploiting the mirror relationship between the distributed optimization and distributed resource allocation, several accelerated distributed  algorithms are proposed in  \cite{nedic2018improved,xu2018dual}. Moreover, \cite{liang2019distributed} and \cite{zhu2019distributed} study continuous-time algorithms for DRAPs by using the machinery of control theory.

For unbalanced networks, the algorithm design for DRAPs is much more complicated, which has been widely acknowledged in the distributed optimization literature \cite{xie2018distributed,cai2012average}. In this case, a consensus based algorithm that adopts the celebrated surplus idea \cite{cai2012average} is proposed in \cite{yang2013consensus} and \cite{xu2015fully}. However, their convergence results are only for quadratic cost functions where the analyses rely on the linear system theory.  The extension to general convex functions is performed in \cite{xu2017distributed} by adopting the nonnegative surplus method, at the expense of a slower convergence rate. The ADMM-based algorithms are developed in \cite{li2018admm,falsone2019tracking}, and algorithms that aim to handle communication delay in time-varying networks and perform event-triggered updates are respectively studied in \cite{yang2017distributed} and \cite{shi2018distributed}. We note that all the above-mentioned works \cite{yang2013consensus,xu2015fully,xu2017distributed,yang2017distributed,shi2018distributed,li2018admm,falsone2019tracking} do not provide explicit convergence rates for their algorithms. In contrast, the algorithm proposed in this work is proved to achieve a linear convergence rate for strongly convex and Lipschitz smooth cost functions, and has a sublinear convergence rate without the Lipschitz smoothness.

There are several recent works with convergence rate analyses of their algorithms over unbalanced networks. Most of them leverage the dual relationship between DRAPs and distributed optimization problems.  For example, the algorithms in  \cite{zhang2019distributed} and \cite{yuan2019stochastic} use stochastic gradients and diminishing stepsize to solve the dual problem of DRAPs, and thus their convergence rates are limited to an order of $O(\ln(k)/\sqrt{k})$ for Lipschitz smooth cost functions. \cite{yuan2019stochastic} also shows a rate of $O(\ln(k)/{k})$  if the cost function is strongly convex. An algorithm with linear  convergence rate is recently proposed in \cite{li2018convergence} for strongly convex and Lipschitz smooth cost functions. However, its convergence rate is unclear if either the strongly convexity or the Lipschitz smoothness is removed. In \cite{aybat2016distributed}, a push-sum-based algorithm is proposed by incorporating the alternating direction method of multipliers (ADMM). Although it can handle time-varying networks, the convergence rate is $O(1/k)$ even for strongly convex and Lipschitz smooth functions.

\subsection{Our contributions}

In this work, we propose a distributed dual gradient tracking algorithm (DDGT) to solve DRAPs over unbalanced networks. The DDGT exploits the duality of DRAPs and distributed optimization problems, and takes advantage of the distributed push-pull gradient algorithm (PPG) \cite{pu2018push}, which is also called $\mathcal{AB}$ algorithm in \cite{xin2018linear}. If the  cost function is strongly convex and Lipschitz smooth,  we show that the DDGT converges at a linear rate $O(\lambda^k),\lambda\in(0,1)$. If the Lipschitz smoothness  is not satisfied, we show the convergence of the DDGT and establish an  convergence rate $O(1/k)$. To our best knowledge, these convergence results are only reported for \emph{undirected} or \emph{balanced} networks in \cite{nedic2018improved}. Although  a distributed algorithm for directed networks is also proposed in  \cite{nedic2018improved}, there is no convergence analysis. The advantages of the DDGT over existing algorithms are also validated by numerical experiments.

To characterize the sublinear convergence of the DDGT, we first show that PPG  converges sublinearly to a stationary point even for {\em non-convex} objective functions. Clearly, this advances existing works \cite{pu2018push,xin2018linear,saadatniaki2018optimization} as their convergence results are only for strongly-convex objective functions. In fact, the convergence proofs for  PPG in \cite{pu2018push,xin2018linear,saadatniaki2018optimization} require constructing a complicated 3-dimensional matrix and then derive the linear convergence rate $O(\lambda^k)$ where $\lambda\in(0,1)$ is the spectral radius of this matrix. This approach is no longer applicable since a linear convergence rate is usually not attainable for general non-convex functions \cite{nesterov2013introductory} and hence the spectral radius of such a matrix cannot be strictly less than one.

\subsection{Paper organization and notations}

The rest of this paper is organized as follows. In Section \ref{sec2}, we formulate the constrained DRAPs with some standard assumptions. Section \ref{sec3} firstly derives the dual problem of DRAPs which is amenable to distributed optimization, and then introduces the PPG. The DDGT is then obtained by applying PPG to the dual problem and improving the initialization. In Section \ref{sec4}, the convergence result of the DDGT is derived by establishing the convergence of PPG for non-convex objective functions. Section \ref{sec5} performs numerical experiments to validate the effectiveness of the DDGT. Finally, we draw conclusive remarks in Section \ref{sec6}.

We use a lowercase $x$, bold letter $\vx$ and uppercase $X$ to denote a scalar, vector, and matrix, respectively. $\vx^\T$ denotes the transpose of the vector $\vx$. $[X]_{ij}$ denotes the element in the $i$-th row and $j$-th column of the matrix $X$. For vectors we use $\|\cdot\|$ to denote the $l_2$-norm. For matrices we use $\|\cdot\|$ and $\|\cdot\|_F$ to denote respectively the spectral norm and the Frobenius norm. $|\cX|$ denotes the cardinality of set $\cX$. $\bR^n$ denotes the set of $n$-dimensional real vectors. $\bone$ denotes the vector with all ones, the dimension of which depends on the context. $\nabla f(\vx)$ denotes the gradient of a differentiable function $f$ at $\vx$. We say a nonnegative matrix $X$ is row-stochastic if $X\bone=\bone$, and column-stochastic if $X^\T$ is row-stochastic. $O(\cdot)$ denotes the big-O notation.

\section{Problem formulation}\label{sec2}
Consider the distributed resource allocation problems (DRAPs) with $n$ nodes, where each node $i$ has a local private cost function $F_i:\bR^m\ra\bR$. The goal is to solve the following optimization problem in a distributed manner:
\bea\label{obj}
	&\minimize_{\vw_1,\cdots,\vw_n\in\bR^m} &&\sum_{i=1}^n F_i(\vw_i)\\
	&\st &&\vw_i\in\cW_i,\ \sum_{i=1}^{n}\vw_i = \sum_{i=1}^{n}\vd_i
\ena
where $\vw_i\in \bR^m$ is the local decision vector of node $i$, representing the resources allocated to $i$. $\cW_i$ is a local convex and closed constraint set. $\vd_i$ denotes the resource demand of node $i$. Both $\cW_i$ and $\vd_i$ are only known to node $i$. Let $\vd\triangleq\sum_{i=1}^{n}\vd_i$, then $\sum_{i=1}^{n}\vw_i =\vd$ represents the constraint on total available resources, showing the coupling among nodes.
\begin{remark}
	Problem \eqref{obj} covers many forms of DRAPs considered in the literature. For example, the standard local constraint  $\cW_i=[\underline{w}_i, \overline{w}_i]$ for some constants $\underline{w}_i$ and $\overline{w}_i$ is a one-dimensional special case of \eqref{obj}, see e.g. \cite{xu2017distributed,xu2015fully,yang2013consensus,yang2017distributed,li2018convergence}. Moreover, the coupling constraint can be given in a weighted form $\sum_{i=1}^{n}A_i\vw_i = \vd$, which can be transformed into \eqref{obj} by defining a new variable $\vw'_i=A_i\vw_i$ and a local constraint set $\cW'_i=\{A_i\vw_i|\vw_i\in\cW_i\}$. In addition, many works only consider quadratic cost functions\cite{xu2015fully,yang2013consensus}.
\end{remark}

Solving \eqref{obj} in a distributed manner means that each node can only communicate and exchange information with a subset of nodes via a communication network, which is modeled by a directed graph $\cG=(\cV,\cE)$. Here $\cV=\{1,\cdots,n\}$ denotes the set of nodes, $\cE\subseteq\cV\times\cV$ denotes the set of edges, and $(i,j)\in\cE$ if node $i$ can send information to node $j$. Note that $(i,j)\in\cE$ does not necessarily imply that $(j,i)\in\cE$. Define $\cN_i^{\text{in}}=\{j|(j,i)\in\cE\}\cup\{i\}$ and $\cN_i^{\text{out}}=\{j|(i,j)\in\cE\}\cup\{i\}$ as the set of in-neighbors and out-neighbors of node $i$, respectively. That is, node $i$ can only receive messages from its in-neighbors and send messages to its out-neighbors.  Let $a_{ij}>0$ be the weight associated to edge $(j,i)\in\cE$. $\cG$ is {\em balanced} if $\sum_{j\in \cN_i^{\text{in}}} a_{ij}=\sum_{j\in\cN_i^{\text{out}}} a_{ji}$ for all $i\in\cV$. Note that balancedness is a relatively strong condition, since it can be difficult or even impossible to find weights satisfying it for a general directed graph \cite{gharesifard2010does}. 

The following assumptions are made throughout the paper.
\begin{assum}[Strong convexity and Slater's condition]\label{assum1}
	\begin{enumerate}
		\item	The local cost function $F_i$ is $\mu$-strongly convex for all $i\in\cV$, i.e., for any $\vw,\vw'\in\bR^m$ and $\theta\in[0,1]$,
		      \bea
			      &F_i(\theta\vw+(1-\theta)\vw')\\
			      &\leq\theta F_i(\vw)+(1-\theta)F_i(\vw')-\frac{\mu}{2}\theta(1-\theta)\|\vw-\vw'\|^2.
		      \ena
		\item The constraint $\sum_{i=1}^{n}\vw_i = \vd$ is satisfied for some point in the relative interior of the Cartesian product $\cW:=\cW_1\times\cdots\times\cW_n$.
	\end{enumerate}
\end{assum}

\begin{assum}[Strongly connected network]\label{assum2}
	$\cG$ is strongly connected, i.e., there exists a directed path from any node $i$ to any node $j$.
\end{assum}

Assumption \ref{assum1} is common in the literature. Note that we do not assume the differentiability of $F_i$. Under Assumption \ref{assum1}, the optimal point of \eqref{obj} is unique. Let $F^\star$ and $\vw_i^\star,i\in\cV$ denote respectively its optimal value and optimal point, i.e., $F^\star=\sum_{i=1}^{n}F_i(\vw_i^\star)$.  Assumption \ref{assum2} is also common and necessary for the information mixing over a network.

\section{The Distributed Dual Gradient Tracking Algorithm}\label{sec3}

This section introduces our distributed dual gradient tracking algorithm (DDGT)  to solve \eqref{obj} over an unbalanced network. We start with the dual problem of \eqref{obj} and transform it as a form of distributed optimization. Then, the DDGT is obtained by using the push-pull gradient method (PPG \cite{pu2018push,xin2018linear}) on the dual problem, which is an efficient distributed optimization algorithm over \emph{unbalanced} networks. 

\subsection{The dual problem of \eqref{obj} and PPG }\label{sec3b}

Define the Lagrange function of \eqref{obj} as
\bee\label{lagrange}
L(W,\vx)=\sum_{i=1}^n F_i(\vw_i)+\vx^\T(\sum_{i=1}^{n}\vw_i-\vd)
\ene
where $W=[\vw_1,\cdots,\vw_n]\in\bR^{m\times n}$ and $\vx$ is the Lagrange multiplier. Then, the dual problem of \eqref{obj} is given by
\bee\label{dual}
	\maximize _{\vx\in\bR^m}\ \inf_{W\in\cW} L(W,\vx).
\ene

Under Assumption \ref{assum1}, the strong duality holds \cite{boyd2004convex}, \cite[Exercise 5.2.2]{bertsekas2016nonlinear}. The objective function in \eqref{dual} is written as
\bea\label{dual_obj}
	\inf_{W\in\cW} L(W,\vx)&=\inf_{W\in\cW}\sum_{i=1}^{n}(F_i(\vw_i)+\vx^\T \vw_i)-\vx^\T \vd\\
	&=\sum_{i=1}^{n}\inf_{\vw_i\in\cW_i}\{F_i(\vw_i)+\vx^\T \vw_i\}-\vx^\T \vd\\
	&=\sum_{i=1}^{n}-F_i^*(-\vx)-\vx^\T\vd
\ena
where
\bee\label{eq1_sec3}
F_i^*(\vx)\triangleq \sup_{\vw_i\in\cW_i}\{\vw_i^\T\vx-F_i(\vw_i)\}
\ene 
is the convex conjugate function corresponding to the pair $(F_i,\cW_i)$ \cite[Section 5.4]{bertsekas2016nonlinear}. Thus, the dual problem \eqref{dual} can be rewritten as a convex optimization problem
\bee\label{dual_obj2}
	\minimize_{\vx\in\bR^m}\ f(\vx)\triangleq\sum_{i=1}^{n} f_i(\vx),\ f_i(\vx)\triangleq F_i^*(-\vx)+\vx^\T\vd_i
\ene
or equivalently,
\bea\label{obj2}
	&\minimize_{\vx_1,\cdots,\vx_n\in\bR^m} &&\sum_{i=1}^{n}f_i(\vx_i)\\
	&\st &&\vx_1=\cdot\cdot\cdot=\vx_n.
\ena

Recall that strong duality holds, and therefore problem \eqref{obj2} is equivalent to problem $\eqref{obj}$ in the sense that the optimal value of \eqref{obj2} is $f^\star=-F^\star$ and the optimal point $\vx_1^\star=\cdots=\vx_n^\star=\vx^\star$ of \eqref{obj2} satisfies $F_i(\vw_i^\star)+F_i^\ast(-\vx^\star)=-(\vw_i^\star)^\T\vx^\star$. Hence, we can simply focus on solving the dual problem \eqref{obj2}.

The strong convexity of $F_i$ implies that $F_i^*$ is differentiable with Lipschitz continuous gradients \cite{boyd2004convex}, and the supremum in \eqref{eq1_sec3} is attainable. By Danskin's theorem \cite{bertsekas2016nonlinear}, the gradient of $F_i^*$ is given by
$
	\nabla F_i^*(\vx)=\argmax_{\vw\in\cW_i}\{\vx^\T\vw-F_i(\vw)\}.
$
Thus, it follows from \eqref{dual_obj2} that
\bea\label{eq3_sec1}
	\nabla f_i(\vx)&=-\nabla F_i^*(-\vx)+\vd_i\\
	&=-\argmin_{\vw\in\cW_i}\{\vx^\T\vw+F_i(\vw)\}+\vd_i.
\ena

The dual form \eqref{obj2} allows us to take advantage of recent advances in distributed optimization  to solve DRAPs over \emph{unbalanced} networks. For example, distributed algorithms are proposed in \cite[gradient-push]{nedic2015distributed},  \cite[Push-DIGing]{nedic2017achieving}, \cite[ExtraPush]{zeng2017extrapush}, \cite[DEXTRA]{xi2017dextra}, \cite{mai2016distributed} to solve \eqref{obj2} over general directed and unbalanced graphs. Asynchronous algorithms are also studied in \cite{zhang2019asyspa,zhang2019asynchronous,zhao2015asynchronous,wu2018decentralized}. In particular, \cite{pu2018push} and \cite{xin2018linear} propose  PPG algorithm (or called $\cA\cB$ in \cite{xin2018linear}) by using the idea of gradient tracking, which achieves a linear convergence rate if the objective function $f_i$ is strongly convex and Lipschitz smooth for all $i$. Moreover, PPG has an empirically faster convergence speed than its competitors (e.g. \cite{nedic2017achieving}), and its linear update rule is an advantage for implementation. The compact form of PPG is given as
\bea\label{ab}
	\vx_{k+1}^{(i)}&=\sum_{j\in\cN_i^{\text{in}}}a_{ij}(\vx_{k}^{(j)}-\alpha \vy_k^{(j)})\\
	\vy_{k+1}^{(i)}&=\sum_{j\in\cN_i^{\text{in}}}b_{ij}\vy_{k}^{(j)}+\nabla f_i(\vx_{k+1}^{(i)})-\nabla f_i(\vx_{k}^{(i)})
\ena
where $a_{ij}> 0$ for any $j\in\cN_i^{\text{in}}$ and $\sum_{j\in\cN_i^{\text{in}}} a_{ij}=1$, $b_{ij}>0$ for any $i\in\cN_j^{\text{out}}$ and $\sum_{i\in\cN_j^{\text{out}}} b_{ij}=1$, $\alpha$ is a positive stepsize, and $\vx_0^{(i)}$ and $\vy_0^{(i)}$ are initialized such that $\vy_0^{(i)}=\nabla f_i(\vx_0^{(i)}),\forall i\in \cV$. Intuitively, the update for $\vy_k^{(i)}$ aims to asymptotically track the global gradient $\nabla f(\bar \vx_k)$ and the update for $\vx_k^{(i)}$ enforces it to converge to $\bar\vx_k$  while performing an inexact gradient descent step, where $\bar\vx_k=\frac{1}{n}\sum_{i=1}^{n}\vx_k^{(i)}$ is the mean of nodes' states. We refer interested readers to \cite{pu2018push,xin2018linear} for more discussions on PPG.

\subsection{The DDGT algorithm}

We are ready to present the DDGT algorithm. Plugging the gradient \eqref{eq3_sec1} into \eqref{ab} and noticing that the $\vd_i$ term is cancelled in $\nabla f_i(\vx_{k+1}^{(i)})-\nabla f_i(\vx_{k}^{(i)})$, we have
\begin{subequations}\label{alg}
	\noeqref{eq1_alg,eq2_alg,eq3_alg,eq2b_alg,eq2c_alg}
	\begin{align}
		\overline{\vw}_{k+1}^{(i)} & =\sum_{j\in\cN_i^{\text{in}}} a_{ij}(\overline{\vw}_{k}^{(j)} + \alpha \vs_k^{(j)}),\label{eq1_alg} \\
		\vw_{k+1}^{(i)}       & =\argmin_{\vw\in\cW_i}\{F_i(\vw)-\vw^\T\overline\vw_{k+1}^{(i)}\},\label{eq2_alg}                   \\
		\vs_{k+1}^{(i)}       & =\sum_{j\in\cN_i^{\text{in}}} b_{ij}\vs_k^{(j)}-(\vw_{k+1}^{(i)}-\vw_k^{(i)}).\label{eq3_alg}
	\end{align}
\end{subequations}
where notations have been changed to keep consistency with the primal problem \eqref{obj}, i.e., $\vx_{k}^{(i)}=-\overline\vw_{k}^{(i)}$ and $\vy_{k}^{(i)}=\vs_{k}^{(i)}$. 

The DDGT is summarized in Algorithm \ref{alg_doba} and we now elaborate on it. After initialization, each node $i$ iteratively updates three vectors $\overline{\vw}_{k}^{(i)}, {\vw}_{k}^{(i)}$ and ${\vs}_{k}^{(i)}$. In particular, at each iteration node $i$ receives $\widetilde{\vw}_{k}^{(j)}:=\overline{\vw}_{k}^{(j)}+\alpha\vs_k^{(j)}$ and $\widetilde\vs_k^{(ji)}:=b_{ij}\vs_k^{(j)}$ from each of its in-neighbors $j$, and updates $\overline{\vw}_{k+1}^{(i)}$ according to \eqref{eq1_alg}, where $a_{ij}$ is positive for any $j\in\cN_i^{\text{in}}$ such that $\sum_{j\in\cN_i^{\text{in}}} a_{ij}=1$ as with \eqref{ab}, and $\alpha$ is a positive stepsize. The update of ${\vs}_{k}^{(i)}$ in \eqref{eq3_alg} is similar, where $b_{ij}>0$ for any $i\in\cN_j^{\text{out}}$ and $\sum_{i\in\cN_j^{\text{out}}} b_{ij}=1$. This process repeats until terminated.
We set $a_{ij}=b_{ij}=0$ for any $(j,i)\notin\cE$ for convenience. Define two matrices $[A]_{ij}=a_{ij}$ and $[B]_{ij}=b_{ij}$, then $A$ is a row-stochastic matrix and $B$ is a column-stochastic matrix. Clearly, the directed network associated with $A$ and $B$ can be unbalanced.

\begin{remark}
	In practice, one can simply set $a_{ij}={|\cN_i^{\text{in}}|^{-1}}$ and $b_{ij}={|\cN_j^{\text{out}}|^{-1}}$, and then all conditions are satisfied. Note that this setting requires each node to know the number of its in-neighbors and out-neighbors, which is common in the literature of distributed optimization over directed networks \cite{nedic2015distributed,nedic2017achieving,zeng2017extrapush,xi2017dextra}.
\end{remark}

Notably, the initialization for DDGT exploits the structure of the DRAPs and improves that of PPG. By PPG,  $\vw_0^{(i)}$ and $\vs_0^{(i)}$ should be exactly set as $\vw_0^{(i)}=\widetilde\vw_i^\star$ and $\vs_0^{(i)}=\vd_i-\widetilde\vw_i^\star$, where $\widetilde\vw_i^\star=\argmin_{\vw\in\cW_i}F_i(\vw)$ is a local minimizer. In DDGT, the computation of $\widetilde\vw_i^\star$ is actually not necessary since the update without $\widetilde\vw_i^\star$ in $\vw_0^{(i)}$ and $\vs_0^{(i)}$ and the update with it become equivalent after the first iteration due to the special form of $\nabla f_i(\vx)$. Clearly, the former is simpler and is adopted in DDGT.

The update of $\vw_{k}^{(i)}$ in \eqref{eq2_alg} requires finding an optimal point of an auxiliary \emph{local} optimization problem, which can be obtained by standard algorithms, e.g., projected (sub)gradient method or Newton's method, and can even be given in an explicit form for some special cases. Note that solving sub-problems per iteration is common in many duality-based optimization algorithms, including the dual ascent method and proximal method \cite{bertsekas2015convex}.

\begin{remark}
Consider two special cases. The first one is that the local constraint set $\cW_i=\bR^m$ and $F_i$ is differentiable as in \cite{lakshmanan2008decentralized}. Then,  \eqref{eq2_alg} becomes 
\bee\label{eq2b_alg}
	\tag{9b$'$}
	\vw_{k+1}^{(i)}=\nabla^{-1}F_i(\overline\vw_{k+1}^{(i)})
\ene
where $\nabla^{-1}F_i$ denotes the inverse function of $\nabla F_i$, i.e., $\nabla^{-1}F_i(\nabla F_i(\vx))=\vx$ for any $\vx\in\bR^m$. 

The second case is that the decision variable is a scalar, $\cW_i$ is an interval $[\underline{w}_i,\overline{w}_i]$, and $F_i$ is differentiable as in  \cite{xu2017distributed,yang2017distributed,yang2013consensus}. Then, \eqref{eq2_alg} becomes
\bee\label{eq2c_alg}
	\tag{9b$''$}
	\vw_{k+1}^{(i)}=\left\{\begin{array}{ll}
		\overline{w}_i,                         & \text{if }\nabla^{-1}F(\overline\vw_{k+1}^{(i)})>\overline{w}_i       \\
		\underline{w}_i,                   & \text{if }\nabla^{-1}F(\overline\vw_{k+1}^{(i)})<\underline{w}_i \\
		\nabla^{-1}F(\overline\vw_{k+1}^{(i)}), & \text{otherwise}
	\end{array}\right.
\ene
which is in fact adopted in \cite{yang2013consensus,xu2017distributed,yang2017distributed}.  Hence, \eqref{eq2_alg} can be seen as an extension of their methods. \qed
\end{remark}

An interesting feature of DDGT lies in the way to handle the coupling constraint $\sum_{i=1}^{n}\vw_k^{(i)} = \vd$. Notice that DDGT is simply initialized such that $\vw_0^{(i)}=0,\forall i\in\cV$ and $\sum_{i=1}^{n}\vs_0^{(i)}=\vd$. By summing \eqref{eq3_alg} over $i=1,\cdots,n$, we obtain that $\sum_{i=1}^{n}(\vw_k^{(i)}+\vs_k^{(i)})=\sum_{i=1}^{n}(\vw_0^{(i)}+\vs_0^{(i)})=\vd$. Thus, if $\vs_{k}^{(i)}$ converges to 0, the constraint is satisfied asymptotically, which is essential to the convergence proof of the DDGT.

\begin{algorithm}[t!]
	\makeatletter
	\renewcommand\footnoterule{%
		\kern-3\p@
		\hrule\@width.4\columnwidth
		\kern2.6\p@}
	\makeatother
	\begin{minipage}{\linewidth}
	\caption{The Distributed Dual Gradient Tracking Algorithm (DDGT) --- from the view of node $i$}\label{alg_doba}
	\label{algorithm}
	\begin{itemize}[leftmargin=*]
		\item{\bf Initialization:} Let $\overline{\vw}_{0}^{(i)}=0$, $\vw_0^{(i)}=0$, $\vs_{0}^{(i)}=\vd_i$.\footnote{\label{footnote1}If only the total resource demand $\vd$ is known to all nodes, then we can simply set $\vs_0^{(i)}=\frac{1}{n}\vd$, which can be done in a distributed manner \cite{xu2017distributed}.}
		\item{\bf For $k=0,1,\cdots,K$,} {\bf repeat}
		      \begin{enumerate}[leftmargin=*]
			      \renewcommand{\labelenumi}{\theenumi:}
			      \item Receive $\widetilde{\vw}_{k}^{(j)}:=\overline{\vw}_{k}^{(j)}+\alpha\vs_k^{(j)}$ and $\widetilde\vs_k^{(ji)}:=b_{ij}\vs_k^{(j)}$ from its in-neighbor $j$.
			      \item Compute $\overline{\vw}_{k+1}^{(i)}$, ${\vw}_{k+1}^{(i)}$ and $\vs_{k+1}^{(i)}$ as \eqref{alg}.
			      \item Broadcast $\widetilde{\vw}_{k+1}^{(i)}$ and $\widetilde \vs_{k+1}^{(i)}$ to each of out-neighbors.
		      \end{enumerate}
		\item{{\bf Return} $\vw_K^{(i)}$.}
	\end{itemize}
\end{minipage}
\end{algorithm}

By strong duality, the convergence of DDGT can be established by showing the convergence of PPG. However, existing results, e.g.,\cite{pu2018push,xin2018linear,pu2018distributed,saadatniaki2018optimization}  for the convergence of PPG are established only if $f_i$ is strongly convex and Lipschitz smooth. Note that $f_i$ in \eqref{obj2} is often \emph{not} strongly convex due to the introduction of convex conjugate function $F_i^*$, though $F_i$ in \eqref{obj} is strongly convex \cite{boyd2004convex}. This is indeed the case if $F_i$ includes exponential term \cite{zhao2017distributed} or logarithmic term \cite{zheng2014optimal}. Without Lipschitz smoothness for $F_i$, we can only obtain that $f_i$ is differentiable and $\frac{1}{\mu}$-Lipschitz smooth \cite[Theorem 4.2.1]{Hiriart-Urruty1993}, i.e.,
\bee\label{lsmooth}
	\|\nabla f_i(\vx)-\nabla f_i(\vy)\|\leq \frac{1}{\mu}\|\vx-\vy\|, \forall i\in\cV,\vx,\vy\in\bR^n.
\ene

Thus, we still need to prove the convergence of PPG for non-strongly convex objective functions $f_i$. Particularly, a crucial step in the convergence proof of PPG in \cite{pu2018push,xin2018linear} uses a complicated 3-dimensional matrix whose spectral radius is strictly less than one for a sufficiently small stepsize. Then, PPG converges at a linear rate. This does not work here since the spectral radius of such a matrix cannot be strictly less than one if $f_i$ is not strongly convex. In fact, we cannot expect a linear convergence rate for the non-strongly convex case \cite{nesterov2013introductory}.

Next, we shall prove that PPG converges to a stationary point at a rate of $O(1/k)$ even for {\em non-convex} objective functions, based on which we  show the convergence and evaluate the convergence rate of DDGT.

\section{Convergence Analysis}\label{sec4}

In this section, we first establish the convergence result of PPG in \eqref{ab}  for non-convex $f_i$, which is of independent interest as the existing results on PPG only apply to the strongly convex case. Then, we show  the convergence of the DDGT and evaluate the convergence rate for a special case.

\subsection[convergence of AB]{Convergence analysis of PPG without convexity}

Consider PPG given in \eqref{ab}. With a slight abuse of notation, let $f_i$ be a general differentiable function in the rest of this subsection. Denote
\bea\label{eq1_sec4}
	X_k&=[\vx_k^{(1)},\cdots,\vx_k^{(n)}]^\T\in\bR^{n\times m}\\
	Y_k&=[\vy_k^{(1)},\cdots,\vy_k^{(n)}]^\T\in\bR^{n\times m}\\
	\nabla\vf_k&=[\nabla f_1(\vx_k^{(1)}),\cdots,\nabla f_n(\vx_k^{(n)})]^\T\in\bR^{n\times m}
\ena
and
\bea
[A]_{ij}=\left\{
\arraycolsep=3pt
\begin{array}{ll}
	a_{ij}, & \text{if }(j,i)\in\cE \\
	0,      & \text{otherwise,}
\end{array}
\right.\
\arraycolsep=3pt
[B]_{ij}=\left\{
\begin{array}{ll}
	b_{ij}, & \text{if }(j,i)\in\cE \\
	0,      & \text{otherwise.}
\end{array}
\right.
\ena
Note that $A$ is row-stochastic and $B$ is column-stochastic. The starting points of all nodes are set to the same point $\vx_0$ for simplicity.

Then, \eqref{ab} can be written in the following compact form
\begin{subequations}
	\label{ppg}
	\noeqref{ppga,ppgb}
	\begin{align}
		X_{k+1} & = A(X_k - \alpha Y_k)\label{ppga}                      \\
		Y_{k+1} & = BY_k + \nabla \vf_{k+1} - \nabla \vf_{k}\label{ppgb}
	\end{align}
\end{subequations}

The convergence result of PPG for non-strongly convex or even non-convex functions are given in the following result. 
\begin{theo}[Convergence of PPG without convexity]\label{theo1}
	Suppose Assumption \ref{assum2} holds and $f_i, i\in\cV$ in \eqref{obj2} is differentiable and $L$-Lipschitz smooth (c.f. \eqref{lsmooth}). If the stepsize $\alpha$ is sufficiently small, i.e., $\alpha$ satisfies \eqref{eq_alpha1} and \eqref{eq_alpha}, then $\{\vx_k^{(i)}\},i\in\cV$ generated by  \eqref{ab} satisfies that
	\bea\label{eq1_theo1}
		&\frac{1}{k}\sum_{t=1}^{k} \|\nabla f(\bar\vx_{t})\|^2\leq \frac{f(\vx_0)-f^\star}{\gamma k}+\frac{3L\alpha^2(L^2c_0^2+ c_2^2)}{\gamma(1-\theta)^2k}\\
		&\quad+\frac{\alpha(\sqrt{n}Lc_0+c_2)(1+\sum_{t=1}^{k_0}\|\nabla f(\bar\vx_{t})\|^2)}{\gamma(1-\theta)^2k}
	\ena
	where $\bar\vx_k=\sum_{i=1}^{n}\pi_A^{(i)}\vx_k^{(i)}$, $\pi_A$ is the normalized left Perron vector of $A$, and $\theta,c_0,c_2,\gamma, k_0$ are positive constants given in \eqref{eq4_s3}, \eqref{eq6_s3}, \eqref{eq_gamma}, \eqref{eq_k0} of Appendix, respectively. 
	
	Moreover, it holds that
	\bee\label{eq2_theo1}
		\frac{1}{k}\sum_{t=1}^{k} \|X_t-\bone\bar\vx_t^\T\|_F^2\leq\frac{2c_0^2}{(1-\theta)^2k}+\frac{c_1^2\alpha^2}{k}\sum_{t=1}^{k} \|\nabla f(\bar\vx_{t})\|^2
	\ene
	and if $f$ is convex, $f(\bar\vx_k)$ converges to  $f^\star$.
\end{theo}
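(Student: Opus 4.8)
The plan is to reduce the coupled dynamics \eqref{ppg} to the evolution of three quantities: the weighted average $\bar\vx_k^\T=\pi_A^\T X_k$, the $X$-consensus error $\|X_k-\bone\bar\vx_k^\T\|_F$, and the $Y$-tracking error $\|Y_k-\pi_B\bone^\T Y_k\|$, where $\pi_B$ is the right Perron vector of $B$. First I would record the two invariants behind gradient tracking. Since $A$ is row-stochastic, $\pi_A^\T A=\pi_A^\T$ and $A\bone=\bone$, so left-multiplying \eqref{ppga} by $\pi_A^\T$ gives the exact average recursion $\bar\vx_{k+1}=\bar\vx_k-\alpha Y_k^\T\pi_A$. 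Since $B$ is column-stochastic, $\bone^\T B=\bone^\T$, so summing \eqref{ppgb} and using the initialization $Y_0=\nabla\vf_0$ yields the tracking identity $\bone^\T Y_k=\bone^\T\nabla\vf_k=(\sum_i\nabla f_i(\vx_k^{(i)}))^\T$ for every $k$.

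Next I would derive a descent inequality for $f$ along $\bar\vx_k$. Because $f$ inherits Lipschitz smoothness from the $f_i$, the descent lemma applied to the average recursion gives $f(\bar\vx_{k+1})\le f(\bar\vx_k)-\alpha\langle\nabla f(\bar\vx_k),Y_k^\T\pi_A\rangle+\tfrac{L\alpha^2}{2}\|Y_k^\T\pi_A\|^2$. The descent direction $Y_k^\T\pi_A$ is not the gradient, so the key is to treat it as a proxy: writing it in terms of the tracked total gradient $\bone^\T Y_k=\sum_i\nabla f_i(\vx_k^{(i)})$ plus the $Y$-tracking error, and replacing $\sum_i\nabla f_i(\vx_k^{(i)})$ by $\nabla f(\bar\vx_k)=\sum_i\nabla f_i(\bar\vx_k)$ up to an $L$-Lipschitz error controlled by the $X$-consensus error, I would obtain $-\alpha\langle\nabla f(\bar\vx_k),Y_k^\T\pi_A\rangle\le-\gamma\|\nabla f(\bar\vx_k)\|^2+(\text{terms in the two errors})$, where $\gamma$ is essentially $\alpha\,\pi_A^\T\pi_B$. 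This isolates a genuine $-\gamma\|\nabla f(\bar\vx_k)\|^2$ descent at the price of error terms driven by the consensus and tracking errors.

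The third ingredient is linear contraction estimates for the two errors. Subtracting the rank-one parts from \eqref{ppga}--\eqref{ppgb} and using that $A-\bone\pi_A^\T$ and $B-\pi_B\bone^\T$ contract in suitable weighted norms with factor $\theta\in(0,1)$, I would derive $e_{k+1}^x\le\theta e_k^x+\alpha\,O(\|Y_k\|)$ and a companion bound for $e_{k+1}^y$ that feeds back $\|\nabla\vf_{k+1}-\nabla\vf_k\|$, itself controlled by $L$, $\alpha$, $e_k^x$, $e_k^y$, and $\|\nabla f(\bar\vx_k)\|$. Summing these geometric recursions over $t=1,\dots,k$ bounds $\sum_t e_t^x$ and $\sum_t e_t^y$ by constants plus a multiple of $\alpha^2\sum_t\|\nabla f(\bar\vx_t)\|^2$; this is where the constants $c_0,c_1,c_2$ and the transient index $k_0$ enter, the latter because the first iterations of the $Y$-recursion must be accounted for separately before the geometric tail dominates. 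I would then sum the descent inequality, telescope the $f$ terms into $f(\vx_0)-f^\star$, substitute the summed error bounds, and invoke the stepsize conditions \eqref{eq_alpha1}--\eqref{eq_alpha} to guarantee that the coefficient of $\sum_t\|\nabla f(\bar\vx_t)\|^2$ produced by the errors is strictly below $\gamma$, so it can be absorbed to the left. Dividing by $k$ yields \eqref{eq1_theo1}, and \eqref{eq2_theo1} follows by summing only the $e_t^x$ recursion.

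The main obstacle I anticipate is the mismatch of norms: the natural contraction for the $X$-block lives in the $\pi_A$-weighted norm while that for the $Y$-block lives in the $\pi_B$-weighted norm, and every cross term between the blocks incurs norm-equivalence constants. Preventing these constants from pushing the effective contraction factor to $1$ while simultaneously securing the absorption of $\sum\|\nabla f(\bar\vx_t)\|^2$ is precisely what forces the two separate smallness conditions on $\alpha$. For the convex conclusion, I would note that \eqref{eq1_theo1} makes $\sum_{t=1}^{\infty}\|\nabla f(\bar\vx_t)\|^2$ finite, hence $\|\nabla f(\bar\vx_t)\|\to0$; combining this with the vanishing $X$-consensus error from \eqref{eq2_theo1} and the convexity bound $f(\bar\vx_k)-f^\star\le\langle\nabla f(\bar\vx_k),\bar\vx_k-\vx^\star\rangle$, together with boundedness of $\bar\vx_k$ inherited from the summable-error descent, gives $f(\bar\vx_k)\to f^\star$.
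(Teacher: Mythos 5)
Your proposal follows essentially the same route as the paper's proof: the same invariants $\bar\vx_{k+1}=\bar\vx_k-\alpha Y_k^\T\pi_A$ and $\bone^\T Y_k=\bone^\T\nabla\vf_k$, the descent lemma along $\bar\vx_k$ with $\bar\vy_k$ decomposed into $\pi_B^\T\pi_A\,\nabla f(\bar\vx_k)$ plus tracking and consensus errors (so $\gamma\approx\alpha\pi_B^\T\pi_A$, exactly as in the paper), the coupled linear contraction system for the two error norms built from Lemmas \ref{lemma2} and \ref{lemma3} (the paper's $2\times 2$ matrix $P$ with $\theta=\rho(P)<1$ under \eqref{eq_alpha1}), the summation-and-absorption step under \eqref{eq_alpha} with the transient index $k_0$ arising from the geometric tail $\theta^{t-1}\|\nabla f(\bar\vx_t)\|^2$, and \eqref{eq2_theo1} from summing the $X$-error recursion. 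Your anticipated difficulty (norm mismatch between the $A$- and $B$-adapted norms forcing the two stepsize conditions) is precisely where the paper spends its effort, including a Toeplitz/Gershgorin bound on the convolution sums $\sum_t\|\nabla f(\bar\vx_t)\|\sum_l\theta^{l-1}\|\nabla f(\bar\vx_{t-l})\|$ that your ``sum the geometric recursions'' step would need to make rigorous.

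The one genuine weak point is your argument for the convex conclusion. You claim boundedness of $\bar\vx_k$ is ``inherited from the summable-error descent,'' but this does not follow: the descent inequality \eqref{eq3_s2} with summable perturbations controls the \emph{values} $f(\bar\vx_k)$, not the iterates, and for a convex $f$ with unbounded sublevel sets the sequence $\bar\vx_k$ can drift (indeed $\sum_t\|\nabla f(\bar\vx_t)\|^2<\infty$ only gives $\|\nabla f(\bar\vx_t)\|\to 0$, not summability of $\|\bar\vy_t\|$, so the displacement $\alpha\sum_t\bar\vy_t$ need not be bounded). Consequently the bound $f(\bar\vx_k)-f^\star\leq\|\nabla f(\bar\vx_k)\|\,\|\bar\vx_k-\vx^\star\|$ does not close without an additional assumption such as bounded sublevel sets. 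The paper proceeds differently here: it applies the supermartingale convergence theorem to \eqref{eq3_s2} to conclude that $f(\bar\vx_k)$ converges, and then uses the summability of $\|\nabla f(\bar\vx_t)\|^2$ to identify the limit with $f^\star$ under convexity. You should either adopt that route or add an explicit boundedness/coercivity hypothesis to justify your last step; everything before it matches the paper's argument.
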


The proof of Theorem \ref{theo1} is deferred to the Appendix. Theorem \ref{theo1} shows that PPG converges to a stationary point of $f$ at a rate of $O(1/k)$ for non-convex functions. The order of convergence rate is consistent with the centralized gradient descent algorithm \cite{bertsekas2015convex}. Generally, the network size $n$ affects the convergence rate in a complicated way since it closely relates to the network topology and the two weighting matrices $A$ and $B$.  If $\sigma_\sA,\sigma_\sB,\delta_{\sA \sF}$ and $\delta_{\sB \sF}$ in Lemmas 2 and 3 of Appendix do not vary with $n$, which holds, e.g., by setting $A=B$ in some undirected graphs such as complete graphs and star graphs, then it follows from \eqref{eq4_s3}, \eqref{eq_alpha1}, \eqref{eq6_s3} and \eqref{eq_gamma} that $\theta=O(1)$, $\alpha=O(1/\sqrt{n})$, $c_0\approx O(\sqrt{n})$ and $\gamma\approx O(\alpha)$. Then, \eqref{eq1_theo1} ensures a convergence rate $O(n/k)$, which is reasonable since the Lipschitz constant $L$ is defined in terms of \emph{local} objective functions, and the \emph{global} Lipschitz constant generally increases linearly with $n$, implying a convergence rate $O(n/k)$ even for the centralized gradient descent method \cite[Section 6.1]{bertsekas2015convex}.
\subsection{Convergence of the DDGT}

We now establish the convergence and quantify the convergence rate of the DDGT.
\begin{theo}[Convergence of the DDGT]\label{theo2}
	Suppose Assumptions \ref{assum1} and \ref{assum2} hold. If the stepsize $\alpha>0$ is  smaller than an upper bound  given in \eqref{eq_alpha1} and \eqref{eq_alpha} with $L$ replaced by $1/\mu$, 
 then $\{\vw_{k}^{(i)}\},i\in\cV$ in Algorithm \ref{algorithm}  converges to an optimal point of \eqref{obj}, i.e., $\lim_{k\ra\infty}\vw_{k}^{(i)}=\vw_i^\star,\forall i\in\cV$.
	\end{theo}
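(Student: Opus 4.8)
The plan is to reduce everything to the dual problem \eqref{obj2} and invoke Theorem \ref{theo1}. First I would check that Theorem \ref{theo1} applies: each $f_i$ is convex (a convex conjugate is always convex) and, by \eqref{lsmooth}, $\tfrac{1}{\mu}$-Lipschitz smooth, so $f=\sum_i f_i$ is convex and $\tfrac{1}{\mu}$-smooth. The DDGT is exactly PPG \eqref{ab} run on \eqref{obj2} under the identification $\vx_k^{(i)}=-\overline{\vw}_k^{(i)}$, $\vy_k^{(i)}=\vs_k^{(i)}$, and the stepsize hypothesis is precisely the condition of Theorem \ref{theo1} with $L=1/\mu$. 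Since $f$ is convex, Theorem \ref{theo1} gives $f(\bar\vx_k)\to f^\star$ together with the averaged bounds \eqref{eq1_theo1}--\eqref{eq2_theo1}. From $f(\bar\vx_k)\to f^\star$ and the standard smooth--convex inequality $\tfrac{\mu}{2}\|\nabla f(\bar\vx_k)\|^2\le f(\bar\vx_k)-f^\star$, I would upgrade this to full-sequence gradient decay $\|\nabla f(\bar\vx_k)\|\to0$. I would likewise promote consensus to full-sequence convergence $\|\vx_k^{(i)}-\bar\vx_k\|\to0$: the coupled consensus/gradient-tracking error recursion of the Appendix is contractive for small $\alpha$ and driven by $\|\nabla f(\bar\vx_k)\|$, which now vanishes.

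Next I would connect the dual iterates to the primal ones. By \eqref{eq3_sec1}, the update \eqref{eq2_alg} rewrites as $\vw_k^{(i)}=\vd_i-\nabla f_i(\vx_k^{(i)})$, while for any dual optimizer $\vx^\star$ the optimality relation of \eqref{obj2} gives $\vw_i^\star=\vd_i-\nabla f_i(\vx^\star)$ with $\nabla f(\vx^\star)=0$ (and $\nabla f_i(\vx^\star)=\vd_i-\vw_i^\star$ is independent of the chosen $\vx^\star$ by uniqueness of the primal optimum). Introduce the primal point recovered at the average, $\hat\vw_k^{(i)}:=\vd_i-\nabla f_i(\bar\vx_k)=\argmin_{\vw\in\cW_i}\{F_i(\vw)+\vw^\T\bar\vx_k\}$. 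Using the $\mu$-strong convexity of $\vw\mapsto F_i(\vw)+\vw^\T\vx^\star$ (minimized at $\vw_i^\star$), summing over $i$, and substituting the conjugate identities $\sum_i F_i(\hat\vw_k^{(i)})=-f(\bar\vx_k)+\bar\vx_k^\T\nabla f(\bar\vx_k)$ and $\sum_i\hat\vw_k^{(i)}=\vd-\nabla f(\bar\vx_k)$, I obtain the error bound
\[ \tfrac{\mu}{2}\sum_{i=1}^n\|\hat\vw_k^{(i)}-\vw_i^\star\|^2\le -(f(\bar\vx_k)-f^\star)+(\bar\vx_k-\vx^\star)^\T\nabla f(\bar\vx_k)\le \|\bar\vx_k-\vx^\star\|\,\|\nabla f(\bar\vx_k)\|. \]
Because $\|\nabla f(\bar\vx_k)\|\to0$, this forces $\hat\vw_k^{(i)}\to\vw_i^\star$ provided $\{\bar\vx_k\}$ is bounded. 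Finally $\|\vw_k^{(i)}-\hat\vw_k^{(i)}\|=\|\nabla f_i(\vx_k^{(i)})-\nabla f_i(\bar\vx_k)\|\le\tfrac{1}{\mu}\|\vx_k^{(i)}-\bar\vx_k\|\to0$ by \eqref{lsmooth} and consensus, so $\vw_k^{(i)}\to\vw_i^\star$, which is the claim.

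The main obstacle is the boundedness of $\{\bar\vx_k\}$, which is exactly what converts the vanishing dual gradient into a vanishing primal error above; it is nontrivial since $f$ need not be coercive and its minimizer set may a priori be unbounded. Here I would invoke Slater's condition (Assumption \ref{assum1}(2)): strict relative-interior feasibility guarantees that the set of optimal multipliers, i.e.\ the dual optimal set $\cX^\star=\argmin f$, is nonempty and \emph{bounded} \cite{bertsekas2016nonlinear}. For a convex function, a bounded nonempty minimizer set is equivalent to every sublevel set being bounded (the recession function is positive in all nonzero directions); since $f(\bar\vx_k)\to f^\star$, the iterates eventually lie in a bounded sublevel set, yielding $\sup_k\|\bar\vx_k\|<\infty$. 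The secondary technical point is upgrading the Cesàro estimates \eqref{eq1_theo1}--\eqref{eq2_theo1} to full-sequence convergence of $\|\nabla f(\bar\vx_k)\|$ and the consensus error, which I would extract from the contractive error recursion in the Appendix rather than from the averaged bounds as stated.
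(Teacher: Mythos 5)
Your proposal is correct, but at the decisive primal-recovery step it takes a genuinely different (and heavier) route than the paper. The paper never introduces a dual optimizer $\vx^\star$: writing $W_k$ for the minimizer of the Lagrangian $L(\cdot,\bar\vx_k)$ over $\cW$, it uses strong duality to write $f(\bar\vx_k)-f^\star = L(W^\star,\bar\vx_k)-L(W_k,\bar\vx_k)$, and then $\mu$-strong convexity of the cost together with the first-order optimality condition of the constrained minimizer ($\partial_W L(W_k,\bar\vx_k)^\T(W^\star-W_k)\geq 0$) yields the direct error bound $\frac{\mu}{2}\|W_k-W^\star\|_F^2\leq f(\bar\vx_k)-f^\star$, i.e.\ \eqref{eq_theo2}; convergence then follows from $f(\bar\vx_k)\ra f^\star$ (Theorem \ref{theo1}) alone, with \emph{no} boundedness of $\{\bar\vx_k\}$, no bounded dual optimal set, and no full-sequence gradient decay. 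Your tilt by the \emph{optimal} multiplier instead produces $\frac{\mu}{2}\sum_i\|\hat\vw_k^{(i)}-\vw_i^\star\|^2\leq\|\bar\vx_k-\vx^\star\|\,\|\nabla f(\bar\vx_k)\|$, which forces you to import the chain Slater condition $\Rightarrow$ bounded dual optimal set $\Rightarrow$ compact sublevel sets $\Rightarrow$ $\sup_k\|\bar\vx_k\|<\infty$, plus full-sequence decay of $\|\nabla f(\bar\vx_k)\|$; all of these auxiliary facts are true under the standing assumptions, and your recursion-based consensus decay via the geometric convolution in \eqref{eq3_s3} is sound, so your proof goes through. Note, however, that your own intermediate inequality already contains the unused negative term $-(f(\bar\vx_k)-f^\star)$: applying strong convexity at the \emph{current} tilt $\bar\vx_k$ (whose minimizer is $\hat\vw_k^{(i)}$) rather than at $\vx^\star$ collapses your bound to the paper's and eliminates the entire boundedness discussion. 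What your route buys in exchange is that it makes explicit two steps the paper compresses: the paper's $W_k$ must be read as the argmin of $L(\cdot,\bar\vx_k)$ --- your $\hat\vw_k^{(i)}$ --- and not the actual iterate $\vw_k^{(i)}$, which minimizes at the \emph{local} variable $\vx_k^{(i)}$; and the passage from $\hat\vw_k^{(i)}\ra\vw_i^\star$ to $\vw_k^{(i)}\ra\vw_i^\star$ needs exactly your bridge $\|\vw_k^{(i)}-\hat\vw_k^{(i)}\|\leq\frac{1}{\mu}\|\vx_k^{(i)}-\bar\vx_k\|\ra 0$, which the paper's ``obtained immediately'' glosses over. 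One minor slip that does not affect the conclusion: since each $f_i$ is $\frac{1}{\mu}$-smooth, $f$ is $\frac{n}{\mu}$-smooth, so your smooth--convex inequality should read $\frac{\mu}{2n}\|\nabla f(\bar\vx_k)\|^2\leq f(\bar\vx_k)-f^\star$; alternatively, gradient decay follows directly from the summability of $\sum_t\|\nabla f(\bar\vx_t)\|^2$ implied by \eqref{eq1_theo1}.
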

\begin{proof}
	Under Assumption \ref{assum1}, the strong duality holds between the original problem \eqref{obj} and its dual problem \eqref{obj2}. Recall the relation between the DDGT \eqref{alg} and PPG \eqref{ab}. We obtain that $f(\bar\vx_k)$ converges to $f^\star$ by the convexity of the dual problem and Theorem \ref{theo1}, and $f^\star=-F^\star=-L(W^\star,\vx)$ for any $x\in\bR^m$. Moreover, 
	\bea\label{eq_theo2}
	f(\bar\vx_k)-f^\star&=L(W^\star,\bar\vx_k)-\inf_{W\in\cW}L(W,\bar\vx_k)\\
	&= L(W^\star,\bar\vx_k)-L(W_k,\bar\vx_k)\\
	&\geq \partial_W L(W_k,\bar\vx_k)^\T(W^\star-W_k)+\frac{\mu}{2}\|W_k-W^\star\|_F^2\\
	&\geq \frac{\mu}{2}\|W_k-W^\star\|_F^2
	\ena
	where $L(W,\vx)$ is the Lagrange function in \eqref{lagrange}, $W^\star=[\vw_1^\star,\cdots,\vw_n^\star]$ and $W_k=[\vw_k^{(i)},\cdots,\vw_k^{(i)}]$. The first inequality follows from the strong convexity of $F$ by Assumption \ref{assum1} and the second inequality uses the first-order necessary condition for a constrained minimization problem. The convergence of $\vw_k^{(i)}$ is obtained immediately from \eqref{eq_theo2}. 
	
	The stepsize condition follows from Theorem \ref{theo1} and the Lipschitz smoothness  of the dual function (c.f. \eqref{lsmooth}).
\end{proof}
\begin{remark}
	We note that it is possible to extend the DDGT to time-varying networks \cite{xu2017distributed}, since the convergence of the DDGT essentially depends on that of PPG, and a recent work \cite{saadatniaki2018optimization} shows the feasibility of PPG over time-varying networks for strongly convex functions.
\end{remark}
\subsection{Convergence rate of the DDGT}
As in \cite{lakshmanan2008decentralized} and \cite{nedic2018improved}, this subsection focuses on the special case that $\cW_i=\bR^m$ and $F_i$ is differentiable for all $i\in\cV$ for the convergence rate characterization, since the constrained case involves more complicated concepts and notations such as subdifferential.  

Under Assumption \ref{assum1}, it follows from \cite{boyd2004convex} that the Karush-Kuhn-Tucker (KKT) condition of \eqref{obj} 
\begin{subequations}\label{eq2_sec1}
	\noeqref{eq2a_sec1,eq2b_sec1}
	\begin{align}
		 & \nabla F_1(\vw_1^\star)=\cdots=\nabla F_n(\vw_n^\star)\label{eq2a_sec1}, \\
		 & \sum_{i=1}^{n}\vw_i^\star=\vd\label{eq2b_sec1}
	\end{align}
\end{subequations}
 is a necessary and sufficient condition for optimality. The convergence rate of the DDGT is in terms of \eqref{eq2_sec1}.

\begin{theo}[Convergence rate of the DDGT]\label{theo3}
	Suppose that $\cW_i=\bR^m$, $F_i$ is differentiable for all $i$, and the conditions in Theorem \ref{theo2} are satisfied. Let $\overline{\nabla F_k}=\frac{1}{n}\sum_{i=1}^{n}\nabla F_i(\vw_k^{(i)})$, then $\{\vw_{k}^{(i)}\}$ generated by the DDGT satisfies that
	\bea
	&\frac{1}{k}\sum_{t=1}^{k}\Big(\sum_{i=1}^{n}\|\nabla F_i(\vw_{t}^{(i)})-\overline{\nabla F_t}\|^2+\|\sum_{i=1}^{n}\vw_t^{(i)}-\vd\|^2\Big)\\
	&\leq \frac{2(f(\vx_0)-f^\star)}{\gamma k}+\frac{6L\alpha^2(L^2c_0^2+ c_2^2)}{\gamma(1-\theta)k}+\frac{4nc_0^2(\mu^2+1)}{\mu^2(1-\theta)k}+\\
	&\quad+\frac{2\alpha(\sqrt{n}Lc_0+c_2)(1+\sum_{t=1}^{k_0}\|\nabla f(\bar\vx_{t})\|^2)}{\gamma(1-\theta)k}+O(\frac{1}{k^2})
	\ena
	where all the constants are defined in Theorem \ref{theo1}. 
	
	Moreover, if $F_i, i\in\cV$ has Lipschitz continuous gradients, then $\sum_{i=1}^{n}\|\vw_{k}^{(i)}-\vw_i^\star\|^2$ converges linearly, i.e., $\sum_{i=1}^{n}\|\vw_{k}^{(i)}-\vw_i^\star\|^2\leq O(\lambda^k)$ for some $\lambda\in(0,1)$.
\end{theo}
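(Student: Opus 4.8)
The plan is to prove the two claims by translating the primal-side KKT residuals appearing in the statement into the dual-side quantities $\|\nabla f(\bar\vx_t)\|^2$ and $\|X_t-\bone\bar\vx_t^\T\|_F^2$ that are already controlled by Theorem~\ref{theo1}, and then to invoke Theorem~\ref{theo1} directly for the $O(1/k)$ bound and the existing linear rate of PPG for the linear bound. The two residuals $\sum_i\|\nabla F_i(\vw_t^{(i)})-\overline{\nabla F_t}\|^2$ and $\|\sum_i\vw_t^{(i)}-\vd\|^2$ measure exactly the violations of the KKT conditions \eqref{eq2a_sec1} and \eqref{eq2b_sec1}, so the whole proof is a dictionary between primal iterates and the dual iterates of \eqref{obj2}.

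First I would record two structural identities of the DDGT. Since $\cW_i=\bR^m$ and $F_i$ is differentiable, the first-order condition for the minimizer in \eqref{eq2_alg} gives $\nabla F_i(\vw_{k}^{(i)})=\overline\vw_{k}^{(i)}=-\vx_k^{(i)}$, whence $\nabla f_i(\vx_k^{(i)})=\vd_i-\vw_k^{(i)}$ by \eqref{eq3_sec1}. Therefore $\overline{\nabla F_t}=-\tfrac1n\sum_i\vx_t^{(i)}$ is the uniform average of the $-\vx_t^{(i)}$, and
\[ \sum_{i=1}^n\|\nabla F_i(\vw_t^{(i)})-\overline{\nabla F_t}\|^2=\sum_{i=1}^n\Big\|\vx_t^{(i)}-\tfrac1n\sum_{j=1}^n\vx_t^{(j)}\Big\|^2\le \|X_t-\bone\bar\vx_t^\T\|_F^2, \]
where the inequality uses that the uniform mean minimizes the sum of squared deviations, so replacing it by the Perron-weighted mean $\bar\vx_t$ only increases the sum. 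Second, summing the $\vs$-update \eqref{eq3_alg} with the column-stochasticity of $B$ and the initialization yields the invariant $\sum_i(\vw_k^{(i)}+\vs_k^{(i)})=\vd$, and the gradient-tracking identity $\sum_i\vs_k^{(i)}=\sum_i\nabla f_i(\vx_k^{(i)})$; together these give $\|\sum_i\vw_t^{(i)}-\vd\|^2=\|\sum_i\nabla f_i(\vx_t^{(i)})\|^2$. Splitting $\sum_i\nabla f_i(\vx_t^{(i)})=\nabla f(\bar\vx_t)+\sum_i(\nabla f_i(\vx_t^{(i)})-\nabla f_i(\bar\vx_t))$ and bounding the second piece by the $L$-Lipschitz smoothness \eqref{lsmooth} (with $L=1/\mu$) gives $\|\sum_i\vw_t^{(i)}-\vd\|^2\le 2\|\nabla f(\bar\vx_t)\|^2+2nL^2\|X_t-\bone\bar\vx_t^\T\|_F^2$.

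Adding the two residuals bounds the left-hand side of the first claim by $2\|\nabla f(\bar\vx_t)\|^2+(1+2nL^2)\|X_t-\bone\bar\vx_t^\T\|_F^2$. Averaging over $t=1,\dots,k$ and substituting the two $O(1/k)$ estimates \eqref{eq1_theo1} and \eqref{eq2_theo1} of Theorem~\ref{theo1} then yields the asserted bound; note that \eqref{eq2_theo1} still carries a term proportional to $\tfrac1k\sum_t\|\nabla f(\bar\vx_t)\|^2$, which I re-substitute from \eqref{eq1_theo1}, and this is precisely what generates the stated $O(1/k^2)$ remainder. This part is essentially bookkeeping; the only points requiring care are the uniform-versus-Perron averaging above and tracking the factor $nL^2=n/\mu^2$ through the consensus contribution, after which the precise constants follow by routine collection and the substitution $L=1/\mu$.

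For the linear rate, the decisive observation is that Lipschitz continuity of $\nabla F_i$, i.e. $M$-smoothness of $F_i$, is equivalent by conjugate duality to $\tfrac1M$-strong convexity of $F_i^*$, hence of $f_i$ in \eqref{obj2}; combined with the $\tfrac1\mu$-smoothness already recorded in \eqref{lsmooth}, each dual function $f_i$ becomes simultaneously strongly convex and Lipschitz smooth. In this regime the existing linear-convergence analysis of PPG \cite{pu2018push,xin2018linear} applies and gives $\sum_i\|\vx_k^{(i)}-\vx^\star\|^2\le O(\lambda^k)$ for some $\lambda\in(0,1)$. Finally I translate back to the primal: since $F_i$ is $\mu$-strongly convex, $\nabla^{-1}F_i$ is $\tfrac1\mu$-Lipschitz, so $\|\vw_k^{(i)}-\vw_i^\star\|=\|\nabla^{-1}F_i(-\vx_k^{(i)})-\nabla^{-1}F_i(-\vx^\star)\|\le\tfrac1\mu\|\vx_k^{(i)}-\vx^\star\|$, and summing over $i$ gives $\sum_i\|\vw_k^{(i)}-\vw_i^\star\|^2\le O(\lambda^k)$. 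The main obstacle here is cleanly invoking the conjugate-duality equivalence so that smoothness of the \emph{primal} cost is converted into strong convexity of the \emph{dual}, thereby making the established linear rate of PPG available; once that is in place, the primal-to-dual translation is immediate.
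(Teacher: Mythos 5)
Your proposal is correct and follows essentially the same route as the paper: both reduce the primal KKT residuals to the dual quantities via the identities $\nabla F_i(\vw_k^{(i)})=-\vx_k^{(i)}$ and $\sum_{i}\vw_k^{(i)}-\vd=-\sum_{i}\nabla f_i(\vx_k^{(i)})$, bound them by $\|\nabla f(\bar\vx_t)\|^2$ and the consensus error $\|X_t-\bone\bar\vx_t^\T\|_F^2$, substitute Theorem \ref{theo1}, and invoke the known linear rate of PPG for the strongly convex and smooth case. The only (harmless) local differences are that your variance-minimization argument gives factor $1$ where the paper's decomposition through $\bone\pi_A^\T$ gives $2n$ — so your bound implies the stated constants a fortiori — and that you spell out the conjugacy step (smoothness of $F_i$ yields strong convexity of $f_i$) and the $1/\mu$-Lipschitz primal translation via $\nabla^{-1}F_i$, details the paper explicitly omits.
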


\begin{proof}
	Since $\cW_i=\bR^m$, it follows from \eqref{eq2_alg} that $\vx_{k+1}^{(i)}=-\nabla F_i(\vw_{k+1}^{(i)})$. Thus,
	\bea\label{eq1_thm3}
	&\sum_{i=1}^{n}\|\nabla F_i(\vw_{k}^{(i)})-\overline{\nabla F_k}\|^2\\
	&=\sum_{i=1}^{n}\Big\|\vx_k^{(i)}-\frac{1}{n}\sum_{i=1}^n\vx_k^{(i)}\Big\|^2=\Big\|(I-\frac{1}{n}\bone\bone^\T)X_k\Big\|_F^2\\
	&\leq2\|(I-\bone\pi_A^\T)X_k\|_F^2+2\Big\|(\bone\pi_A^\T-\frac{1}{n}\bone\bone^\T)X_k\Big\|_F^2\\
	&= 2\|X_k-\bone\bar\vx_k^\T\|_F^2+2\Big\|(\frac{1}{n}\bone\bone^\T-\bone\pi_A^\T)(X_k-\bone\bar\vx_k^\T)\Big\|_F^2\\
	&\leq  2n\|X_k-\bone\bar\vx_k^\T\|_F^2
	\ena
	where $X_k$ is defined in \eqref{eq1_sec4}, $\bar\vx_k$ and $\pi_A$ are defined in Theorem \ref{theo1}. The first inequality uses the relation $\|\va+\vb\|^2\leq 2\|\va\|^2+2\|\vb\|^2$, and the last inequality follows from $\|\frac{1}{n}\bone\bone^\T-\bone\pi_A^\T\|_F^2\leq n-1$.

	On the other hand, it follows from \eqref{eq3_sec1} and \eqref{eq2_alg} that 
	\bea 
	&\hspace{-1cm}\sum_{i=1}^{n}\vw_k^{(i)}-\vd=-\sum_{i=1}^{n}\nabla f_i(\vx_{k}^{(i)})\\
	&=-\Big(\nabla f(\bar\vx_{k})+\sum_{i=1}^{n}(\nabla f_i(\vx_{k}^{(i)})-\nabla f_i(\bar \vx_{k}))\Big)
	\ena
	Taking the norm on both sides yields that
	\begin{align}\label{eq2_thm3}
	&\|\sum_{i=1}^{n}\vw_k^{(i)}-\vd\|^2\\
	&\leq2\|\nabla f(\bar\vx_{k})\|^2+2n\sum_{i=1}^{n}\|\nabla f_i(\vx_{k}^{(i)})-\nabla f_i(\bar \vx_{k})\|^2\\
	&\leq2\|\nabla f(\bar\vx_{k})\|^2+\frac{2n}{\mu^2}\sum_{i=1}^{n}\|\vx_{k}^{(i)}-\bar \vx_{k}\|^2\\
	&=2\|\nabla f(\bar\vx_{k})\|^2+\frac{2n}{\mu^2}\|X_k-\bone\bar\vx_k^\T\|_F^2
	\end{align}
	where we use $\|\va+\vb\|^2\leq 2\|\va\|^2+2\|\vb\|^2$ again and the Cauchy-Schwarz inequality to obtain the first inequality, and the second inequality follows from \eqref{lsmooth}. Combining \eqref{eq1_thm3} and \eqref{eq2_thm3} implies that
	\bea
	&\sum_{i=1}^{n}\|\nabla F(\vw_{k}^{(i)})-\overline{\nabla F_k}\|^2+\|\sum_{i=1}^{n}\vw_k^{(i)}-\vd\|^2\\
	&\leq 2\|\nabla f(\bar\vx_{k})\|^2+\frac{2n(1+\mu^2)}{\mu^2}\|X_k-\bone\bar\vx_k^\T\|_F^2
	\ena
	The desired result then follows from Theorem \ref{theo1}.

	The linear convergence rate in the presence of Lipschitz smoothness can be similarly obtained by following the linear convergence of PPG for strongly convex and Lipschitz smooth objective functions (\cite[Theorem 1]{xin2018linear} or \cite[Theorem 1]{pu2018push}), which is omitted to save space.
\end{proof}

Theorem \ref{theo3} shows that the DDGT converges at a sublinear rate $O({1}/{k})$ for strongly convex objective functions, and achieves a linear convergence rate if Lipschitz smoothness is further satisfied. In view of Theorem \ref{theo1}, the explicit form of the term corresponding to $O({1}/{k^2})$ in Theorem \ref{theo3} can be obtained after tedious computations.

\section{Numerical Experiments}\label{sec5}

This section validates our theoretical results and compares the DDGT with existing algorithms via simulation. More precisely, we compare the DDGT with the algorithms in \cite{xu2017distributed,li2018convergence} and \cite[Mirror-Push-DIGing]{nedic2018improved}. Note that \cite{nedic2018improved} does not provide convergence guarantee for Mirror-Push-DIGing, \cite{xu2017distributed} has no convergence rate results, and \cite{li2018convergence} only shows the convergence rate for strongly convex and Lipschitz smooth cost functions. Moreover, the algorithm in \cite{li2018convergence} involves solving a subproblem similar to \eqref{eq2_alg} per iteration and \cite{xu2017distributed} adopts the update in \eqref{eq2c_alg} which is a special case of \eqref{eq2_alg}, and hence the computational complexities of the two algorithms  are similar to DDGT per iteration. In contrast,  Mirror-Push-DIGing \cite{nedic2018improved} requires computing a proximal operator, which may have higher computational costs.

We test these algorithms over $126$ nodes connected via a directed network, which is a real Email network \cite{konect:2016:radoslaw_email,konect}. Each node $i$ is associated with a local quadratic cost function $F_i(w_i)=a_i(w_i-b_i)^2$ where $a_i\sim\cU(0,1)$ and $b_i\sim\cN(0,4)$ are randomly sampled. Note that the quadratic cost function is commonly used in the literature \cite{xu2017distributed,li2018convergence,nedic2018improved}. The global constraint is $\sum_{i=1}^{126}w_i=50$.

We first test the case without local constraints by setting $\cW_i=\bR^m$. The stepsize used for each algorithm is tuned via a grid search\footnote{The grid search scheme works as follows. For each algorithm, we select a ``good" stepsize by inspection, and then gradually increase and decrease stepsizes around the selected one with an equal grid size, respectively. Then, we find the fastest one among all the tried stepsizes. }, and all initial conditions are randomly set. Fig. \ref{fig1} depicts the decay of distance between $\vw_{k}^{(i)}$ and the optimal solution with respect to the number of iterations. It clearly shows that the DDGT has a linear convergence rate and converges faster than algorithms in \cite{xu2017distributed,li2018convergence} and \cite{nedic2018improved}.

To validate the theoretical result for strongly convex cost functions without Lipschitz smoothness, we test the algorithms with a quartic local cost function $F_i(w_i)=a_i(w_i-b_i)^2+c_i(w_i-d_i)^4$, where $c_i\sim\cU(0,10)$ and $d_i\sim\cN(0,4)$ are randomly sampled. Clearly,    this function is strongly convex but not Lipschitz smooth. All other settings remain the same and the result is plotted in Fig. \ref{fig2}, where the Mirror-Push-DIGing \cite{nedic2018improved} is not included because its proximal operator is very time-consuming, and an approximate solution for the proximal operator often leads to a poor performance of the algorithm. The dotted line in Fig. \ref{fig2} is the sequence $\{100/k\}$ with $k$ the number of iterations. We can observe that the convergence rates of all algorithms are slower than that in Fig. \ref{fig1}, but the DDGT still outperforms the other two algorithms. Moreover, it is interesting to observe that the DDGT and the algorithm in \cite{li2018convergence}  have near-linear convergence rate, though the theoretical convergence rate for the DDGT is $O(1/k)$.

Finally, we study the effect of local constraints on the convergence rate. To this end, we assign each node a local constraint $-2\leq w_i\leq 2$, and test all algorithms with the setting of Fig. \ref{fig2}. The result is depicted in Fig. \ref{fig3}, which shows that the convergence of the DDGT is essentially not affected, while the algorithm in \cite{li2018convergence} is heavily slowed compared with that in Fig. \ref{fig2}.

\begin{figure}[!t]
	\centering
	\includegraphics[width=0.6\linewidth]{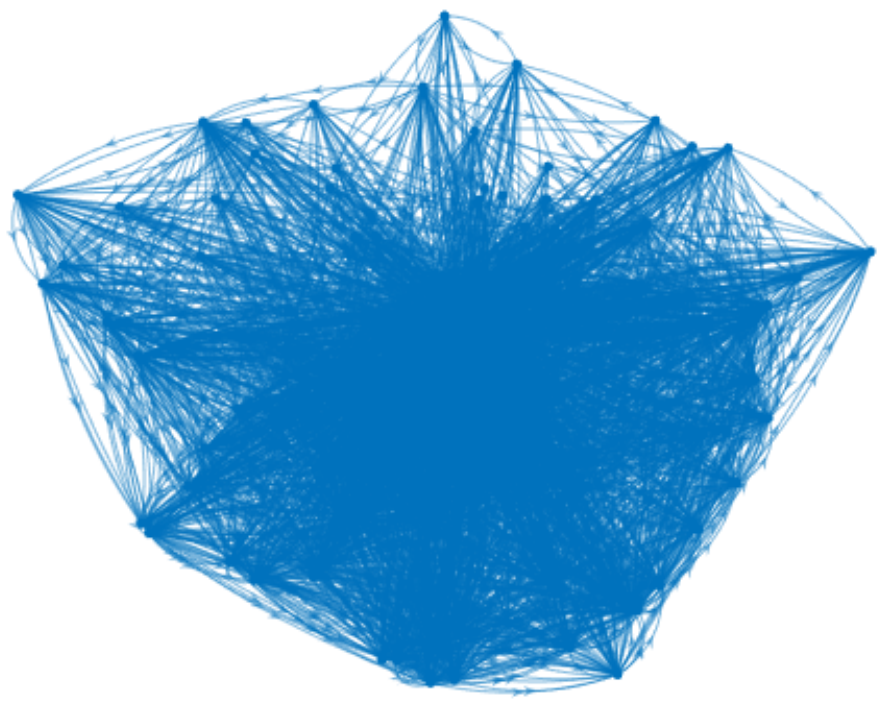}
	\caption{The communication network in \cite{konect:2016:radoslaw_email,konect}.}
	\label{fig4}
\end{figure}

\begin{figure}[!t]
	\centering
	\includegraphics[width=0.8\linewidth]{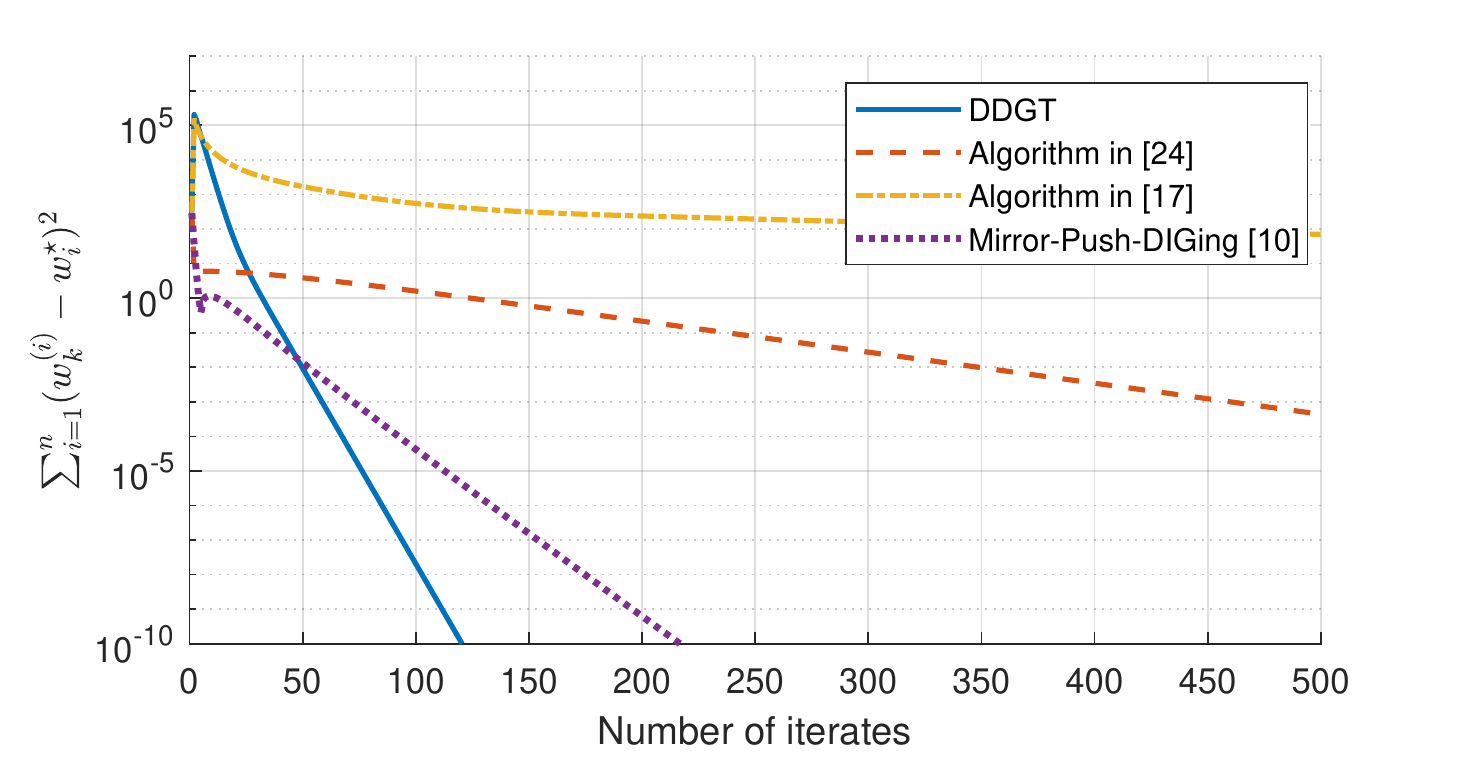}
	\caption{Convergence rate w.r.t the number of iterations of different algorithms with quadratic cost function $F_i(w_i)=a_i(w_i-b_i)^2$.}
	\label{fig1}
\end{figure}

\begin{figure}[!t]
	\centering
	\includegraphics[width=0.8\linewidth]{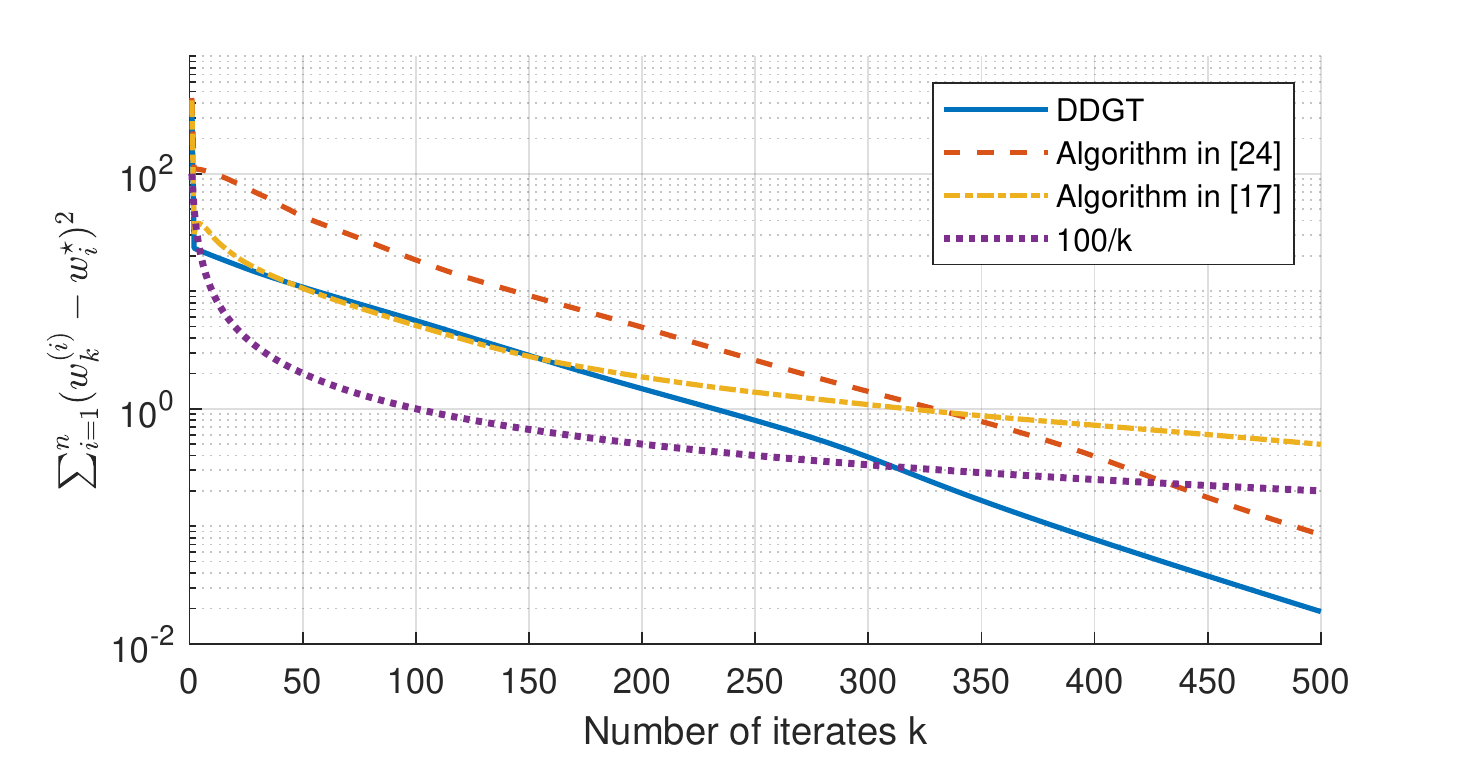}
	\caption{Convergence rate w.r.t the number of iterations of different algorithms with quartic cost function $F_i(w_i)=a_i(w_i-b_i)^2+c_i(w_i-d_i)^4$.}
	\label{fig2}
\end{figure}

\begin{figure}[!t]
	\centering
	\includegraphics[width=0.8\linewidth]{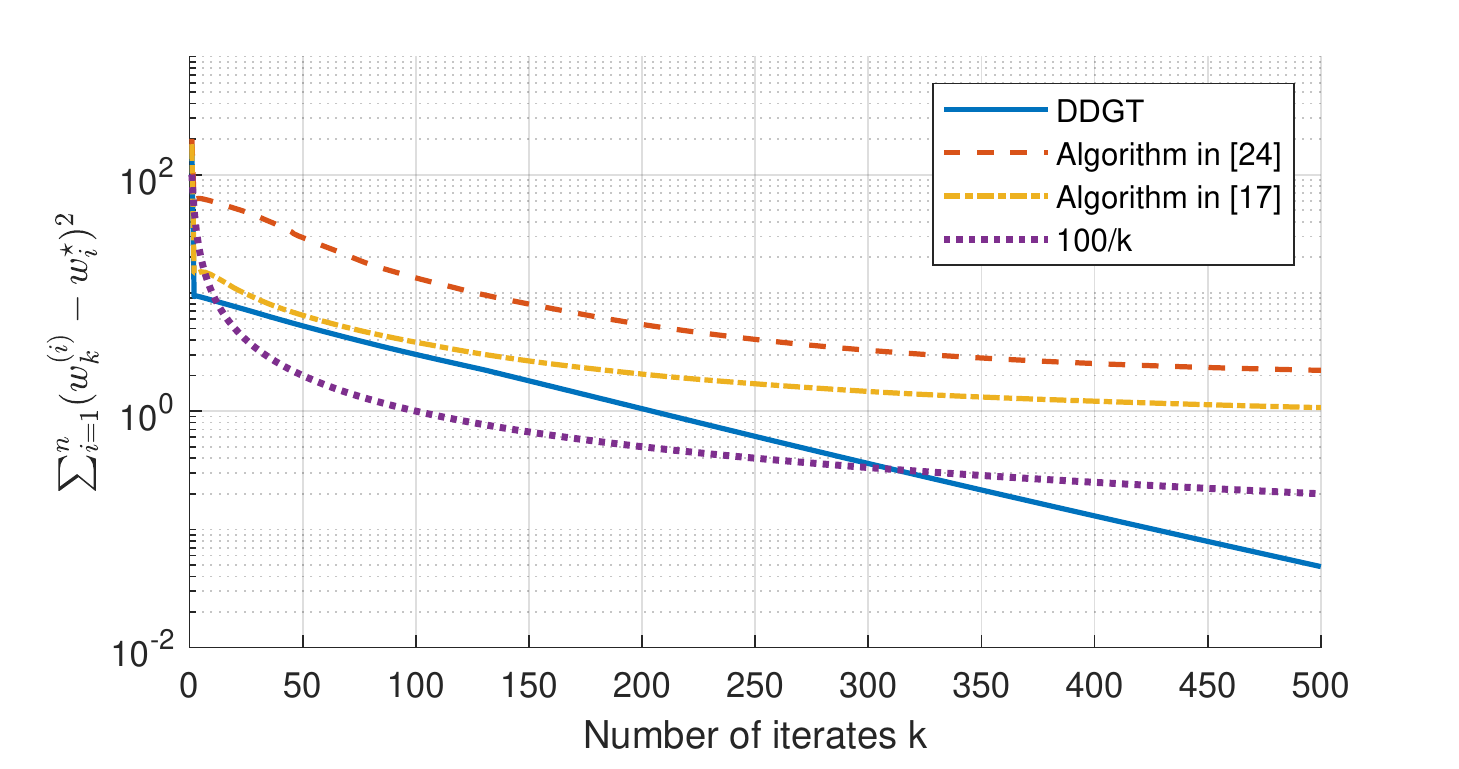}
	\caption{Convergence rate w.r.t the number of iterations of different algorithms with quartic cost function $F_i(w_i)=a_i(w_i-b_i)^2+c_i(w_i-d_i)^4$ and local constraint $-2\leq w_i\leq 2,\forall i$.}
	\label{fig3}
\end{figure}

\section{Conclusion}\label{sec6}

We proposed the DDGT for distributed resource allocation problems (DRAPs) over directed unbalanced networks. Convergence results are provided by exploiting the strong duality of DRAPs and distributed optimization problems, and taking advantage of the PPG algorithm. We studied the convergence and convergence rate of PPG for non-convex problems and obtained that  the DDGT converges linearly for strongly convex and Lipschitz smooth objective functions, and sub-linearly without the Lipschitz smoothness. Future works are to provide tighter bounds for the convergence rate, design asynchronous versions  \cite{zhang2019asyspa,zhang2019asynchronous}, study quantized communication \cite{zhang2018distributed}, and design accelerated algorithms \cite{qu2020accelerated}. In particular, an interesting idea to accelerate the DDGT is to add a vanishing strongly convex regularization term to the dual problems of DRAPs, which may allow a larger stepsize in the early stage and hence possibly lead to faster convergence.

\section*{Acknowledgment}
The authors would like to thank the Associate Editor and anonymous reviewers for their very constructive comments, which greatly improved the quality of this work.

\appendix

\subsection{Preliminary results on stochastic matrices}
We first introduce three lemmas are from \cite{pu2018push,xin2018linear}.
\begin{lemma}[\cite{xin2018linear,pu2018distributed}]\label{lemma1}
	Suppose Assumption \ref{assum2} holds. The matrix $A$ has a unique unit nonnegative left eigenvector $\pi_A$ w.r.t. eigenvalue 1, i.e., $\pi_A^\T A=\pi_A^\T$ and $\pi_A^\T\bone=1$. The matrix $B$ has a unique unit right eigenvector $\pi_B$ w.r.t. eigenvalue 1, i.e., $B\pi_B =\pi_B$ and $\pi_B^\T\bone=1$.
\end{lemma}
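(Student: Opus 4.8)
The plan is to derive both claims from the Perron--Frobenius theorem for irreducible nonnegative matrices. First I would note that, because $i\in\cN_i^{\text{in}}$ and $i\in\cN_i^{\text{out}}$ by construction, both $A$ and $B$ have strictly positive diagonal entries, and their off-diagonal sparsity pattern coincides with the edge set of $\cG$. Assumption \ref{assum2} guarantees that $\cG$ is strongly connected, i.e. there is a directed path between every ordered pair of nodes; this is exactly the combinatorial condition that the nonnegative matrices $A$ and $B$ are \emph{irreducible}. Hence Perron--Frobenius applies to each: the spectral radius is a simple eigenvalue admitting a strictly positive eigenvector, and the corresponding eigenspace on each side is one-dimensional.

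Next I would pin the Perron root to $1$. Since $A$ is row-stochastic we have $A\bone=\bone$, so $1$ is an eigenvalue of $A$; moreover the induced $\ell_\infty$ norm satisfies $\|A\|_\infty=1$, giving $\rho(A)\le 1$ and therefore $\rho(A)=1$. By irreducibility this eigenvalue is simple, so the left eigenvector solving $\pi_A^\T A=\pi_A^\T$ is unique up to scaling and may be chosen nonnegative (in fact positive) by Perron--Frobenius; normalizing via $\pi_A^\T\bone=1$, which is legitimate because the positive entries sum to a positive number, yields the unique $\pi_A$ claimed. For $B$, I would run the identical argument on the right side: column-stochasticity gives $\bone^\T B=\bone^\T$, so $1$ is an eigenvalue, and since $B^\T$ is row-stochastic we have $\rho(B)=\rho(B^\T)=1$. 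Irreducibility of $B$ makes this eigenvalue simple, so the right eigenvector with $B\pi_B=\pi_B$ is unique up to scaling, can be chosen positive, and is then fixed by the normalization $\pi_B^\T\bone=1$.

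The main obstacle is bookkeeping rather than mathematical depth: the two points requiring care are (i) verifying that strong connectivity of $\cG$ is precisely irreducibility of the weighted matrices $A$ and $B$, and (ii) confirming that the eigenvalue $1$ coincides with the spectral radius, so that the simplicity clause of Perron--Frobenius can be invoked; without the latter, uniqueness of the eigenvector would fail. I note that the positive diagonals actually make both matrices primitive, but primitivity is not needed here--irreducibility alone secures simplicity of the Perron root and hence the uniqueness asserted.
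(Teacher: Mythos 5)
Your proof is correct and follows exactly the route the paper itself indicates: the paper gives no proof of Lemma~\ref{lemma1} beyond remarking that it ``follows from the Perron--Frobenius theorem'' (citing \cite{pu2018push,xin2018linear}), and your argument---strong connectivity of $\cG$ gives irreducibility of $A$ and $B$, stochasticity pins the Perron root at $1$, and simplicity plus the normalization $\pi^\T\bone=1$ yields uniqueness---is precisely that standard Perron--Frobenius argument, carried out carefully.
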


The proof of Lemma \ref{lemma1} follows from the Perron-Frobenius theorem and can be found in \cite{pu2018push,xin2018linear}.

\begin{lemma}[\cite{horn2012matrix},\cite{pu2018push,xin2018linear}]\label{lemma2}
	Suppose Assumption \ref{assum2} holds. There exist matrix norms $\|\cdot\|_\sA$ and $\|\cdot\|_\sB$ such that $\sigma_\sA\triangleq \|A-{\bone\pi_A^\T}\|_\sA<1$ and $\sigma_\sB\triangleq \|B-{\pi_B\bone^\T}\|_\sB<1$. Moreover, $\sigma_\sA$ and $\sigma_\sB$ can be arbitrarily close to the second largest absolute value of the eigenvalues of $A$ and $B$, respectively.
\end{lemma}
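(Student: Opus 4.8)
The plan is to reduce both claims to the standard fact (Horn and Johnson, \cite{horn2012matrix}, Lemma 5.6.10) that for any square matrix $M$ and any $\epsilon>0$ there exists an induced matrix norm $\|\cdot\|$ with $\rho(M)\le\|M\|\le\rho(M)+\epsilon$, where $\rho(\cdot)$ denotes the spectral radius. It therefore suffices to identify the spectra of $M_A\triangleq A-\bone\pi_A^\T$ and $M_B\triangleq B-\pi_B\bone^\T$ and to verify that $\rho(M_A)<1$ and $\rho(M_B)<1$, each equal to the second largest modulus among the eigenvalues of $A$ and $B$ respectively.

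First I would establish that $\Pi_A\triangleq\bone\pi_A^\T$ is precisely the spectral (Perron) projection of $A$ associated with the eigenvalue $1$. Using $A\bone=\bone$, $\pi_A^\T A=\pi_A^\T$ and the normalization $\pi_A^\T\bone=1$ from Lemma \ref{lemma1}, one checks directly that $\Pi_A^2=\Pi_A$ and $A\Pi_A=\Pi_A A=\Pi_A$, so $\Pi_A$ is idempotent and commutes with $A$. Consequently $M_A=A(I-\Pi_A)=(I-\Pi_A)A$ annihilates $\mathrm{range}(\Pi_A)=\mathrm{span}\{\bone\}$ and coincides with $A$ on the complementary $A$-invariant subspace $\mathrm{range}(I-\Pi_A)$. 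This yields the key spectral identity $\mathrm{spec}(M_A)=\{0\}\cup(\mathrm{spec}(A)\setminus\{1\})$, counted on the respective invariant subspaces, so the rank-one correction exactly removes the dominant eigenvalue without disturbing the others. Working with the projection rather than with individual eigenvectors is what lets me avoid assuming that $A$ is diagonalizable.

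Next I would argue $\rho(M_A)<1$. Since $i\in\cN_i^{\text{in}}$ forces $a_{ii}>0$, the matrix $A$ has a positive diagonal, and together with the strong connectivity of Assumption \ref{assum2} this makes the nonnegative matrix $A$ primitive. By the Perron--Frobenius theorem for primitive matrices the spectral radius $1$ is attained only by the eigenvalue $1$, and every other eigenvalue has modulus strictly less than $1$. Combined with the spectral identity above, $\rho(M_A)=\max_{\lambda\in\mathrm{spec}(A),\,\lambda\neq1}|\lambda|<1$, i.e. exactly the second largest absolute eigenvalue of $A$. The matrix $B$ is handled symmetrically: $\Pi_B\triangleq\pi_B\bone^\T$ is the Perron projection of the column-stochastic, primitive matrix $B$ (here $b_{jj}>0$ because $j\in\cN_j^{\text{out}}$), giving $\mathrm{spec}(M_B)=\{0\}\cup(\mathrm{spec}(B)\setminus\{1\})$ and hence $\rho(M_B)<1$, equal to the second largest modulus among the eigenvalues of $B$.

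Finally, applying the norm-existence result to $M_A$ and $M_B$ produces induced norms $\|\cdot\|_\sA$ and $\|\cdot\|_\sB$ with $\sigma_\sA=\|M_A\|_\sA\le\rho(M_A)+\epsilon<1$ and $\sigma_\sB=\|M_B\|_\sB\le\rho(M_B)+\epsilon<1$ for $\epsilon$ small enough; since $\epsilon>0$ is arbitrary, these norms can be chosen so that $\sigma_\sA$ and $\sigma_\sB$ are as close as desired to the respective second largest eigenvalue moduli, as claimed. I expect the only genuinely delicate point to be the strict inequalities $\rho(M_A),\rho(M_B)<1$: this is where primitivity, i.e. the aperiodicity supplied by the self-loops $a_{ii},b_{jj}>0$, is essential, since for a merely irreducible \emph{periodic} stochastic matrix other eigenvalues could lie on the unit circle and the argument would break down.
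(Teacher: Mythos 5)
Your proposal is correct and takes essentially the same route as the paper, which proves Lemma \ref{lemma2} by deferring to the references: it applies exactly the norm construction of Lemma 5.6.10 in \cite{horn2012matrix} to $A-\bone\pi_A^\T$ and $B-\pi_B\bone^\T$, whose spectral radii equal the second largest eigenvalue moduli via the Perron--Frobenius argument you reconstruct (with primitivity supplied by the positive diagonals, as in \cite{pu2018push,xin2018linear}). Your explicit use of the Perron projection to handle possibly non-diagonalizable $A$ and $B$ simply fills in details the paper leaves to the cited works.
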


A method to construct such matrix norms can be found in the proof of  Lemma 5.6.10 in \cite{horn2012matrix}.

\begin{lemma}[\cite{pu2018push,xin2018linear}]\label{lemma3}
	There exist constants $\delta_{\sF\sA},\delta_{\sA\sF},\delta_{\sF\sB}$ and $\delta_{\sB\sF}$ such that for any $X\in\bR^{n\times n}$, we have
	\bea
		&\|X\|_F\leq\delta_{\sF\sA}\|X\|_\sA,\ \|X\|_F\leq\delta_{\sF\sB}\|X\|_\sB\\
		&\|X\|_\sA\leq\delta_{\sA\sF}\|X\|_F,\ \|X\|_\sB\leq\delta_{\sB\sF}\|X\|_F
	\ena
\end{lemma}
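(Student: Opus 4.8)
The plan is to recognize Lemma \ref{lemma3} as an instance of the classical fact that any two norms on a finite-dimensional real vector space are equivalent. The space $\bR^{n\times n}$ has dimension $n^2$, and $\|\cdot\|_F$ together with the two matrix norms $\|\cdot\|_\sA,\|\cdot\|_\sB$ supplied by Lemma \ref{lemma2} are all genuine norms on it. The four stated inequalities are then precisely the equivalence constants for the pairs $(\|\cdot\|_F,\|\cdot\|_\sA)$ and $(\|\cdot\|_F,\|\cdot\|_\sB)$, so one may simply cite the norm-equivalence theorem. I would, however, prefer to record a short self-contained argument so that the constants are exhibited concretely.

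For the self-contained route I would prove each direction separately, treating the pair $(\|\cdot\|_F,\|\cdot\|_\sA)$ in detail; the pair involving $\|\cdot\|_\sB$ is handled identically. First, expanding $X=\sum_{i,j}x_{ij}E_{ij}$ in the matrix-unit basis $\{E_{ij}\}$ and using the triangle inequality, homogeneity, and Cauchy--Schwarz gives
\bee
\|X\|_\sA\leq\sum_{i,j}|x_{ij}|\,\|E_{ij}\|_\sA\leq n\Big(\max_{i,j}\|E_{ij}\|_\sA\Big)\|X\|_F,
\ene
since there are $n^2$ coordinates and $\sum_{i,j}|x_{ij}|\leq n\|X\|_F$. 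This yields $\delta_{\sA\sF}:=n\max_{i,j}\|E_{ij}\|_\sA$ and, as a byproduct, shows that the map $X\mapsto\|X\|_\sA$ is continuous in the Frobenius topology.

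For the reverse direction I would exploit compactness: the Frobenius unit sphere $S=\{X:\|X\|_F=1\}$ is closed and bounded in $\bR^{n\times n}\cong\bR^{n^2}$, hence compact. The map $X\mapsto\|X\|_\sA$ is continuous on $S$ by the previous step and strictly positive there (it vanishes only at $X=0\notin S$), so by the extreme value theorem it attains a positive minimum $m>0$. Homogeneity then upgrades $\|X\|_\sA\geq m$ on $S$ to $\|X\|_\sA\geq m\|X\|_F$ for all $X$, i.e., $\|X\|_F\leq m^{-1}\|X\|_\sA$, giving $\delta_{\sF\sA}:=m^{-1}$. Repeating the whole argument with $\|\cdot\|_\sB$ in place of $\|\cdot\|_\sA$ produces $\delta_{\sF\sB}$ and $\delta_{\sB\sF}$.

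There is no genuine obstacle here; the result is standard. The only point needing a word of justification is that $\|\cdot\|_\sA$ and $\|\cdot\|_\sB$ are honest norms rather than mere seminorms, which is guaranteed by their origin in the Horn--Johnson construction (cf.\ the proof of Lemma 5.6.10 in \cite{horn2012matrix}): each has the form $\|X\|_\sA=\|S_A^{-1}XS_A\|$ for an invertible $S_A$ and a fixed submultiplicative base norm, so positive-definiteness is inherited and the map vanishes only at $X=0$. With that observed, both the citation-based and the compactness-based proofs go through without further work.
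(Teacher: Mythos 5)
Your proposal is correct and matches the paper's argument: the paper also disposes of Lemma \ref{lemma3} in one line as ``a direct result of the norm equivalence theorem,'' which is exactly your citation-based route, and your self-contained compactness argument is just the standard proof of that theorem spelled out (your added check that $\|\cdot\|_\sA$ and $\|\cdot\|_\sB$ are genuine norms, via the invertible-similarity construction in \cite[Lemma 5.6.10]{horn2012matrix}, is a sound and welcome detail the paper leaves implicit). Nothing further is needed.
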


Lemma \ref{lemma3} is a direct result of the norm equivalence theorem. If $A$ and $B$ are symmetric, which means the network is undirected, then $\delta_{\sA \sF}=\delta_{\sB \sF}=1$ and $\delta_{\sF\sA}=\delta_{\sF\sB}=\sqrt{n}$.

Note that the norm $\|\cdot\|_\sA$ defined in Lemma \ref{lemma2} is only for matrices in $\bR^{n\times n}$. To facilitate presentation, we slightly abuse the notation and define a \emph{vector} norm $\|\vx\|_\sA\triangleq\|\frac{1}{\sqrt{n}}\vx\bone^\T\|_\sA$ for any $\vx\in\bR^n$, where the norm in the right-hand-side is the matrix norm defined in Lemma \ref{lemma2}. Then, we have
\bee
	\|M\vx\|_\sA=\|\frac{1}{\sqrt{n}}M\vx\bone^\T\|_\sA\leq\|M\|_\sA\Big\|\frac{\vx\bone^\T}{\sqrt{n}}\Big\|_\sA=\|M\|_\sA\|\vx\|_\sA
\ene 
where the first equality is by definition and the inequality follows from the sub-multiplicativity of matrix norms. Moreover, for any matrix $X=[\vx_1,\cdots,\vx_m]\in\bR^{n\times m}$, define the matrix norm $\|X\|_\sA=\sqrt{\sum_{i=1}^{m}\|\vx_i\|_\sA^2}$. Recall that $n\times m$ is the dimension of $X$ and hence the definition is distinguished from that in Lemma \ref{lemma2}. We have
\bea
	\|MX\|_\sA&=\|[M\vx_1,\cdots,M\vx_m]\|_\sA=\sqrt{\sum\nolimits_{i=1}^{m}\|M\vx_i\|_\sA^2}\\
	&\leq\sqrt{\sum\nolimits_{i=1}^{m}\|M\|_\sA^2\|\vx_i\|_\sA^2}=\|M\|_\sA\|X\|_\sA.
\ena
Therefore, for any $M\in\bR^{n\times n}$, $X\in\bR^{n\times m}$, and $\vx\in\bR^n$, the following relation holds
\bee\label{eq1_proof}
	\|MX\|_\sA\leq\|M\|_\sA\|X\|_\sA,\ \|M\vx\|_\sA\leq\|M\|_\sA\|\vx\|_\sA.
\ene
Similarly, we can obtain such a relation based on the matrix norm $\|\cdot\|_\sB$ defined in Lemma \ref{lemma2}.

Next, we define three important auxiliary variables:
\bea\label{eq2_proof}
	\bar\vx_k\triangleq X_k^\T\pi_A,\
	\bar\vy_k\triangleq Y_k^\T\pi_A,\ \hat{\vy}_k\triangleq Y_k^\T\bone\overset{\eqref{ppgb}}=\nabla \vf_k^\T\bone
\ena
where $\bar\vx_k$ is a weighted average of $\vx_k^{(i)}$ that is identical to the one defined in Theorem \ref{theo1}, $\bar\vy_k$ is a weighted average of $\vy_k^{(i)}$, and $\hat\vy_k$ is the sum of $\vy_k^{(i)}$.

Finally, for any $X=[\vx^{(1)},\cdots,\vx^{(n)}]^\T\in\bR^{n\times m}$, let
\bee
	\nabla\vf(X)=[\nabla f_1(\vx^{(1)}),\cdots,\nabla f_n(\vx^{(n)})]^\T\in\bR^{n\times m},
\ene
and let $\rho(X)$ denote the spectral radius of matrix $X$.

\subsection{Proof of Theorem \ref{theo1}}
{\bf Step 1: Bound $\|X_{k}-\bone\bar\vx_{k}^\T\|_\sA$ and $\|Y_{k}-\pi_B\hat\vy_{k}^\T\|_\sB$}

It follows from \eqref{alg} that
\begin{align}\label{eq1_s2}
	&\|X_{k+1}-\bone\bar\vx_{k+1}^\T\|_\sA\\
	&=\|AX_k-\bone\bar\vx_{k}^\T-\alpha(A-\bone\pi_A^\T)Y_k\|_\sA\\
	&=\left\|(A-\bone\pi_A^\T)[(X_k-\bone\bar\vx_{k}^\T)-\alpha (Y_k-\pi_B\hat{\vy}_k^\T)-\alpha \pi_B\hat{\vy}_k^\T]\right\|_\sA\\
	&\leq\sigma_\sA\|X_k-\bone\bar\vx_{k}^\T\|_\sA+\alpha\sigma_\sA\|Y_k-\pi_B\hat{\vy}_k^\T\|_\sA+\alpha\sigma_\sA\|\pi_B\hat{\vy}_k^\T\|_\sA\\
	&\leq\sigma_\sA\|X_k-\bone\bar\vx_{k}^\T\|_\sA+\alpha\sigma_\sA\delta_{\sA\sF}\delta_{\sF\sB}\|Y_k-\pi_B\hat{\vy}_k^\T\|_\sB\\
	&\quad+\alpha\sigma_\sA\delta_{\sA\sF}\|\bone^\T(\nabla \vf(X_k)-\nabla \vf(\bone\bar\vx_k^\T)+\bone^\T\nabla \vf(\bone\bar\vx_k^\T))\|\\
	&\leq\alpha\sigma_\sA\delta_{\sA\sF}\delta_{\sF\sB}\|Y_k-\pi_B\hat{\vy}_k^\T\|_\sB+{\alpha\sigma_\sA\delta_{\sA\sF}}\|\nabla f(\bar\vx_{k})\|\\
	&\quad+(\sigma_\sA+\alpha\sigma_\sA\delta_{\sA\sF}\delta_{\sF\sA}L\sqrt{n})\|X_k-\bone\bar\vx_{k}^\T\|_\sA
\end{align}
where we use Lemma \ref{lemma2} and \eqref{eq1_proof} to obtain the first inequality, the second inequality is from Lemma \ref{lemma3} and \eqref{eq2_proof}, and the last inequality follows from the $L$-Lipschitz smoothness.

Now we bound $\|Y_{k}-\pi_B\hat\vy_{k}^\T\|_\sB$. From \eqref{ppg} we have
\bea\label{eq2_s1}
	&\|Y_{k+1}-\pi_B\hat\vy_{k+1}^\T\|_\sB\\
	&= \|BY_k-\pi_B\hat\vy_{k}^\T+(\nabla \vf_{k+1}-\nabla \vf_{k}) -(\pi_B\hat\vy_{k+1}^\T-\pi_B\hat\vy_{k}^\T)\|_\sB\\
	&=\|(B-\pi_B\bone^\T)(Y_k-\pi_B\hat\vy_{k}^\T)+(I-\pi_B\bone^\T)(\nabla \vf_{k+1}-\nabla \vf_{k})\|_\sB\\
	&\leq\sigma_\sB\|Y_k-\pi_B\hat\vy_{k}^\T\|_\sB+L\delta_{\sB\sF}\|I-\pi_B\bone^\T\|_\sB\|X_{k+1}-X_k\|_F\\
	&\leq\sigma_\sB\|Y_k-\pi_B\hat\vy_{k}^\T\|_\sB+L\delta_{\sB\sF}\|X_{k+1}-X_k\|_F.
\ena
where the last inequality follows from $\|I-\pi_B\bone^\T\|_\sB=1$, which can be readily obtained from the construction of the norm $\|\cdot\|_\sB$ \cite[Lemma 5.6.10]{horn2012matrix}. Moreover, it follows from \eqref{ppga} that
\bea
	&\|X_{k+1}-X_k\|_F=\|AX_{k}-X_k-\alpha AY_k\|_F \\ &=\|(A-I)(X_k-\bone\bar\vx_k^\T)-\alpha AY_k\|_F \\
	&\leq\left\|A-I\right\|\|X_k-\bone\bar\vx_k^\T\|_F +\alpha \|A(Y_k-\pi_B\hat\vy_{k}^\T+\pi_B\hat\vy_{k}^\T)\|_F \\
	&\leq 2\sqrt{n}\|X_k-\bone\bar\vx_k^\T\|_F +\alpha \|A\|(\|Y_k-\pi_B\hat\vy_{k}^\T\|_F +\|\pi_B\hat\vy_{k}^\T\|_F) \\
	&\leq 2\sqrt{n}\delta_{\sF\sA}\|X_k-\bone\bar\vx_k^\T\|_\sA +\alpha \sqrt{n}(\delta_{\sF\sB}\|Y_k-\pi_B\hat\vy_{k}^\T\|_\sB +\|\hat\vy_{k}^\T\|)\\
	&\leq 2\sqrt{n}\delta_{\sF\sA}\|X_k-\bone\bar\vx_k^\T\|_\sA +\alpha \sqrt{n}\delta_{\sF\sB}\|Y_k-\pi_B\hat\vy_{k}\|_\sB\\
	&\quad+\alpha \sqrt{n}\|\bone^\T(\nabla \vf(X_k)-\nabla \vf(\bone\bar\vx_k^\T)+\bone^\T\nabla \vf(\bone\bar\vx_k^\T))\|\\
	&\leq (\alpha\L n\delta_{\sF\sA}+2\sqrt{n}\delta_{\sF\sA})\|X_k-\bone\bar\vx_k^\T\|_\sA\\
	&\quad+\alpha \sqrt{n}\delta_{\sF\sB}\|Y_k-\pi_B\hat\vy_{k}\|_\sB+\alpha\sqrt{n}\|\nabla f(\bar\vx_{k})\|
\ena
where we used $\|A\|\leq\sqrt{n}$. The above relation combined with \eqref{eq2_s1} yields
\bea\label{eq1_s1}
	&\|Y_{k+1}-\pi_B\hat\vy_{k+1}^\T\|_\sB\\
	&\leq (\sigma_\sB+\L \alpha\sqrt{n}\delta_{\sB\sF}\delta_{\sF\sB}) \|Y_k-\pi_B\hat\vy_{k}^\T\|_\sB  \\
	&\quad+\sqrt{n}L\delta_{\sB\sF}\delta_{\sF\sA}(2+\sqrt{n}\L\alpha) \|X_k-\bone\bar\vx_k^\T\|_\sA\\
	&\quad+\alpha\sqrt{n}L\delta_{\sB\sF}\|\nabla f(\bar\vx_{k})\|.
\ena

Combing \eqref{eq1_s2} and \eqref{eq1_s1} implies the following linear matrix inequality
\bea\label{eq_lmi}
	&\underbrace{\begin{bmatrix}
			\|X_{k+1}-\bone\bar\vx_{k+1}^\T\|_\sA \\
			\|Y_{k+1}-\pi_B\hat\vy_{k+1}^\T\|_\sB
		\end{bmatrix}}_{\textstyle \triangleq  \vz_{k+1}}\preccurlyeq
	\underbrace{\begin{bmatrix}
			P_{11} & P_{12} \\
			P_{21} & P_{22}
		\end{bmatrix}}_{\textstyle \triangleq  P}
	\underbrace{\begin{bmatrix}
			\|X_{k}-\bone\bar\vx_{k}^\T\|_\sA \\
			\|Y_{k}-\pi_B\hat\vy_{k}^\T\|_\sB
		\end{bmatrix}}_{\textstyle \triangleq \vz_{k}}\\
	&\quad+\underbrace{\begin{bmatrix}
			\alpha\sigma_\sA\delta_{\sA\sF}\|\nabla f(\bar\vx_{k})\| \\
			\alpha\sqrt{n}\L \delta_{\sB\sF}\|\nabla f(\bar\vx_{k})\|
		\end{bmatrix}}_{\textstyle \triangleq \vu_{k}}\\
\ena
where $\preccurlyeq$ denotes the element-wise less than or equal sign and
\bea
	&P_{11}=\sigma_\sA+\alpha\sigma_\sA\delta_{\sA\sF}\delta_{\sF\sA}L\sqrt{n},&&P_{12}=\alpha\sigma_\sA\delta_{\sA\sF}\delta_{\sF\sB}\\
	&P_{21}=\sqrt{n}L\delta_{\sB\sF}\delta_{\sF\sA}(2+\sqrt{n}\L\alpha),&&P_{22}=\sigma_\sB+\L \alpha\sqrt{n}\delta_{\sB\sF}\delta_{\sF\sB}
\ena
Note that $\rho(P)<1$ for sufficiently small $\alpha$, since
\bee
	\lim_{\alpha\ra0}P=
	\begin{bmatrix}
		\sigma_\sA                                & 0          \\
		2L\sqrt{n}\delta_{\sB \sF}\delta_{\sF\sA} & \sigma_\sB
	\end{bmatrix}
\ene
has spectral radius smaller than 1.

The linear matrix inequality \eqref{eq_lmi} implies that
\bee\label{eq1_s3}
	\vz_{k}\preccurlyeq P^{k-1}\vz_1+\sum_{t=1}^{k-1}P^{t-1}\vu_{k-t}.
\ene
Let $\theta_1$ and $\theta_2$ be the two eigenvalues of $P$ such that $|\theta_2|>|\theta_1|$, and $\theta\triangleq\rho(P)=|\theta_2|$, then $P$ can be diagonalized as
\bee\label{eq3_s1}
	P=T\Lambda T^{-1},\
	\Lambda=\begin{bmatrix}
		\theta_1 & 0        \\
		0        & \theta_2
	\end{bmatrix}.
\ene
Let
$
	\Psi =\sqrt{(P_{11}-P_{22})^2+4P_{12}P_{21}}. 
$
Note that the analysis so far holds if $\sigma_\sA$ is replaced by any value in $(\sigma_\sA,1)$ (similar for $\sigma_\sB$), and hence we assume without loss of generality that $\sigma_\sA\neq\sigma_\sB$ to simplify presentation. In that case, $\Psi$ is lower bounded by some positive value that is independent of $\alpha$, say $\underline{\Psi}$. With some tedious calculations, we have
\bea\label{eq4_s3}
	\centering
	\theta_1&=\frac{P_{11}+P_{22}-\Psi}{2}\\
	\theta&=\theta_2=\frac{P_{11}+P_{22}+\Psi}{2}\\
	&=\frac{1}{2}(\sigma_\sA+\sigma_\sB+L\alpha\sqrt{n}(\delta_{\sB\sF}\delta_{\sF\sB}+\sigma_\sA\delta_{\sA\sF}\delta_{\sF\sA})+\Psi).
\ena
To let $\theta=\theta_2<1$, it is sufficient for $\alpha$ to satisfy
\bee\label{eq_alpha1}
	\alpha<\frac{(1-\sigma_\sA)(1-\sigma_\sB)}{2(\sqrt{n}\L \sigma_\sA\delta_{\sA\sF}\delta_{\sF\sA}+1)(\sqrt{n}\L\delta_{\sB\sF}\delta_{\sF\sB}+1)}.
\ene
Moreover, $T$ and $T^{-1}$ in \eqref{eq3_s1} can be expressed in an explicit form
\bea
	\centering
	&T=\begin{bmatrix}
		\frac{P_{11}-P_{22}-\Psi}{2P_{21}} & \frac{P_{11}-P_{22}+\Psi}{2P_{21}} \\
		1                                  & 1
	\end{bmatrix}, T^{-1}=\begin{bmatrix}
		-\frac{P_{21}}{\Psi} & \frac{P_{11}-P_{22}+\Psi}{2\Psi} \\
		\frac{P_{21}}{\Psi}  & \frac{P_{22}-P_{11}+\Psi}{2\Psi}
	\end{bmatrix}
\ena
It then follows from \eqref{eq3_s1} that
\bea\label{eq4_s1}
	&0\curlyeqprec P^k=T\Lambda^k T^{-1}\\
	&=\begin{bmatrix}
		\frac{\theta_1^k+\theta_2^k}{2}+\frac{(P_{11}-P_{22})(\theta_2^k-\theta_1^k)}{2\Psi} & \frac{P_{12}}{\Psi}(\theta_2^k-\theta_1^k)                                           \\
		\frac{P_{21}}{\Psi}(\theta_2^k-\theta_1^k)                                           & \frac{\theta_1^k+\theta_2^k}{2}+\frac{(P_{11}-P_{22})(\theta_1^k-\theta_2^k)}{2\Psi} \\
	\end{bmatrix}\\
	&\curlyeqprec
	\theta^k\begin{bmatrix}
		1 & {(n\L^2\underline{\Psi})}^{-1}                        \\
		3\sqrt{n}L\delta_{\sB\sF}\delta_{\sF\sA}/\underline{\Psi}                           & 1 \\
	\end{bmatrix}
\ena
where we used $|P_{11}-P_{22}|\leq \Psi,\Psi\geq \underline{\Psi}$, and the bound \eqref{eq_alpha1} to obtain the inequality.

Combining \eqref{eq_lmi}, \eqref{eq1_s3} and \eqref{eq4_s1} yields that
\bea\label{eq3_s3}
	\|X_{k}-\bone\bar\vx_{k}^\T\|_F &\leq c_0\theta^{k-1}+c_1\alpha\sum_{t=1}^{k-1}\theta^{t-1} \|\nabla f(\bar\vx_{k-t})\|\\
	\|Y_{k}-\pi_B\hat\vy_{k}^\T\|_F &\leq c_2\theta^{k-1}+c_3\alpha\sum_{t=1}^{k-1}\theta^{t-1} \|\nabla f(\bar\vx_{k-t})\|\\
\ena
where $c_0,c_1,c_2$ and $c_3$ are constants given as follows
\bea\label{eq6_s3}
	c_0&=\frac{\|Y_{1}-\pi_B\hat\vy_{1}^\T\|_\sB}{nL^2\underline \Psi}\leq\frac{\delta_{\sB\sF}}{nL^2\underline{\Psi}}\Big(\sum_{i=1}^n\|\nabla f_i(\vx_1)\|^2\Big)^{1/2}\\
	c_1&={\sigma_\sA\delta_{\sA \sF}}+\frac{\delta_{\sB\sF}\delta_{\sF\sB}}{nL\underline{\Psi}}\\
	c_2&=\|Y_{1}-\pi_B\hat\vy_{1}^\T\|_\sB\leq\delta_{\sB\sF}\Big(\sum_{i=1}^n\|\nabla f_i(\vx_1)\|^2\Big)^{1/2}\\
	c_3&=\frac{3\sqrt{n}\L\sigma_\sA\delta_{\sB\sF}\delta_{\sF\sA}\delta_{\sA\sF}}{\underline\Psi}+{\sqrt{n}L\delta_{\sB\sF}}.
\ena

{\bf Step 2: Bound $ \|\bar{\vy}_k\|^2 $}

From \eqref{ppg} and the $L$-Lipschitz smoothness, we have
\bea\label{eq2_s2}
	f(\bar\vx_{k+1})\leq f(\bar\vx_{k})-\alpha\nabla f(\bar\vx_{k})^\T\bar\vy_{k}+\frac{L\alpha^2}{2}\|\bar\vy_{k}\|^2.
\ena
Note that
\bea\label{eq4_s2}
	\bar\vy_k&=Y_k^\T\pi_A=(Y_k-\pi_B\hat\vy_{k}^\T+\pi_B\hat\vy_{k}^\T)^\T\pi_A\\
	&=(Y_k-\pi_B\hat\vy_{k}^\T)^\T\pi_A+ Y_k^\T\bone\pi_B^\T\pi_A\\
	&=(Y_k-\pi_B\hat\vy_{k}^\T)^\T\pi_A+(\nabla\vf(X_k)-\nabla \vf(\bone\bar\vx_{k}^\T))^\T\bone\pi_B^\T\pi_A\\
	&\quad+{\pi_B^\T\pi_A}\nabla f(\bar\vx_{k})
\ena
where we used the relation $Y_k^\T\bone=\nabla\vf(X_k)^\T\bone$ and $\nabla\vf(\bone\bar\vx_{k}^\T)^\T\bone=\nabla f(\bar\vx_{k})$. Then, we have
\bea\label{eq5_s2}
	-&\nabla f(\bar\vx_{k})^\T\bar\vy_{k}\\
	&=-\nabla f(\bar\vx_{k})^\T(\nabla\vf(X_k)-\nabla \vf(\bone\bar\vx_{k}^\T))^\T\bone\pi_B^\T\pi_A\\
	&\quad-\nabla f(\bar\vx_{k})^\T(Y_k-\pi_B\hat\vy_{k}^\T)^\T\pi_A-{\pi_B^\T\pi_A}\|\nabla f(\bar\vx_{k})\|^2\\
	&\leq-{\pi_B^\T\pi_A}\|\nabla f(\bar\vx_{k})\|^2+{L}{\sqrt{n}}\|\nabla f(\bar\vx_{k})\|\|X_{k}-\bone\bar\vx_{k}^\T\|_F\\
	&\quad+\|\nabla f(\bar\vx_{k})\|\|Y_{k}-\pi_B\hat\vy_{k}^\T\|_F
\ena
where we used $\|\pi_A\|\leq1$, and the Lipschitz smoothness $\|\nabla\vf(X_k)-\nabla \vf(\bone\bar\vx_{k}^\T)\|_F\leq L\|X_{k}-\bone\bar\vx_{k}^\T\|_F$ to obtain the last inequality.

Moreover, it follows from \eqref{eq4_s2} and the relation $\|\va+\vb+\vc\|^2\leq 3\|\va\|^2+3\|\vb\|^2+3\|\vc\|^2$ that
\bea\label{eq6_s2}
	\|\bar\vy_{k}\|^2&\leq 3\|(Y_k-\pi_B\hat\vy_{k}^\T)^\T\pi_A\|^2+3\|{\pi_B^\T\pi_A}\nabla f(\bar\vx_{k})\|^2\\
	&\quad+3\|(\nabla\vf(X_k)-\nabla \vf(\bone\bar\vx_{k}^\T))^\T\bone\pi_B^\T\pi_A\|^2\\
	&\leq3\|Y_k-\pi_B\hat\vy_{k}^\T\|^2+{3(\pi_B^\T\pi_A)^2}\|\nabla f(\bar\vx_{k})\|^2\\
	&\quad+{3L^2}{n}\|X_{k}-\bone\bar\vx_{k}^\T\|^2.
\ena

{\bf Step 3: Bound $\sum_{t=1}^{k}\|\nabla f(\bar\vx_{t})\|\|X_{t}-\bone\bar\vx_{t}^\T\|_F$ and $\sum_{t=1}^{k} \|X_t-\bone\bar\vx_t^\T\|_F^2 $}

We first bound the summation of the terms  $\|\nabla f(\bar\vx_{t})\|\|X_{t}-\bone\bar\vx_{t}^\T\|_F$ and $\|\nabla f(\bar\vx_{t})\|\|Y_{t}-\pi_B\hat\vy_{t}^\T\|_F$ in \eqref{eq5_s2} over $t=1,\cdots,k$. It follows from \eqref{eq3_s3} that
\bea\label{eq2_s3}
	&\|\nabla f(\bar\vx_{k})\|\|X_{k}-\bone\bar\vx_{k}^\T\|_F\\
	&\leq c_0\theta^{k-1}\|\nabla f(\bar\vx_{k})\|+c_1\alpha\|\nabla f(\bar\vx_{k})\|\sum_{t=1}^{k-1}\theta^{t-1} \|\nabla f(\bar\vx_{k-t})\|\\
\ena

Then, define
\begin{align}
	\vartheta_t&=[\theta^{t-2},\theta^{t-3},\cdots,\theta,1,0,\cdots,0]^\T\in\bR^{k}\\
	\tilde\vartheta_t&=[\underbrace{0,\cdots,0}_{t-1},1,0,\cdots,0]^\T\in\bR^{k}\\
	\upsilon_k&=[\|\nabla f(\bar\vx_{1})\|,\cdots,\|\nabla f(\bar\vx_{k})\|]^\T\in\bR^{k}\\
	\tilde\Theta_k&=\sum_{t=1}^k\vartheta_t\tilde\vartheta_t^\T=\begin{bmatrix}
		0 & 1 & \theta & \cdots & \theta^{k-2} \\
		  & 0 & 1      & \cdots & \theta^{k-1} \\
		  &   & \ddots & \ddots & \vdots       \\
		  &   &        & 0      & 1            \\
		  &   &        &        & 0
	\end{bmatrix}
\end{align}
where $\theta$ is defined in \eqref{eq4_s3}. Note that $\|\nabla f(\bar\vx_t)\|=\upsilon_k^\T\tilde\vartheta_t$ and $\sum_{l=1}^{t-1}\theta^{l-1} \|\nabla f(\bar\vx_{t-l})\|=\upsilon_k^\T\vartheta_t,\forall t\leq k$, which combined with the relation $\|\nabla f(\bar\vx_{k})\|\leq1+\|\nabla f(\bar\vx_{k})\|^2$ and \eqref{eq2_s3} yields
\bea\label{eq7_s2}
	&\sum_{t=1}^{k}\|\nabla f(\bar\vx_{t})\|\|X_{t}-\bone\bar\vx_{t}^\T\|_F\\
	&\leq c_0\sum_{t=1}^{k}\theta^{t-1}(1+\|\nabla f(\bar\vx_{t})\|^2)+c_1\alpha\sum_{t=1}^{k}\|\nabla f(\bar\vx_{t})\|\vartheta_t^\T\upsilon_k\\
	&\leq\frac{c_0}{1-\theta}+{c_0}\sum_{t=1}^{k}\theta^{t-1}\|\nabla f(\bar\vx_{t})\|^2+c_1\alpha\sum_{t=1}^{k}\upsilon_k^\T\tilde\vartheta_t\vartheta_t^\T\upsilon_k\\
	&\leq\frac{c_0}{1-\theta}+{c_0}\sum_{t=1}^{k}\theta^{t-1}\|\nabla f(\bar\vx_{t})\|^2+c_1\alpha\upsilon_k^\T\tilde\Theta_k\upsilon_k.
\ena
The last term $\upsilon_k^\T\tilde\Theta_k\upsilon_k$ in \eqref{eq7_s2} can be bounded by
\bee
	\upsilon_k^\T\tilde\Theta_k\upsilon_k=\upsilon_k^\T\frac{\tilde\Theta_k+\tilde\Theta_k^\T}{2}\upsilon_k\leq\frac{1}{2}\rho(\tilde\Theta_k+\tilde\Theta_k^\T)\|\upsilon_k\|^2\leq\frac{\|\upsilon_k\|^2}{1-\theta}
\ene
where the last inequality follows from 
$
	\rho(\tilde\Theta_k+\tilde\Theta_k^\T)\leq\|\tilde\Theta_k+\tilde\Theta_k^\T\|_1\leq \|\tilde\Theta_k\|_1+\|\tilde\Theta_k\|_\infty\leq\frac{2}{1-\theta}.
$
Thus, we have from \eqref{eq7_s2} that
\bea\label{eq8_s2}
	&\sum_{t=1}^{k}\|\nabla f(\bar\vx_{t})\|\|X_{t}-\bone\bar\vx_{t}^\T\|_F\\
	&\leq\frac{c_0}{1-\theta}+{c_0}\sum_{t=1}^{k}\theta^{t-1}\|\nabla f(\bar\vx_{t})\|^2+\frac{c_1\alpha}{1-\theta}\sum_{t=1}^{k}\|\nabla f(\bar\vx_{t})\|^2
\ena

Similarly, we can bound $\sum_{t=1}^{k}\|\nabla f(\bar\vx_{t})\|\|Y_{t}-\pi_B\hat\vy_{t}^\T\|_F$ as follows,
\bea
	&\sum_{t=1}^{k}\|\nabla f(\bar\vx_{t})\|\|Y_{t}-\pi_B\hat\vy_{t}^\T\|_F\\
	&\leq\frac{c_2}{1-\theta}+{c_2}\sum_{t=1}^{k}\theta^{t-1}\|\nabla f(\bar\vx_{t})\|^2+\frac{c_3\alpha}{1-\theta}\sum_{t=1}^{k}\|\nabla f(\bar\vx_{t})\|^2.
\ena

Next, we bound $\sum_{t=1}^{k} \|X_t-\bone\bar\vx_t^\T\|_F^2 $ and $\sum_{t=1}^{k} \|Y_t-\pi_B\hat\vy_t\|_F^2$. We first consider $\sum_{t=1}^{k} \|X_t-\bone\bar\vx_t^\T\|_F^2 $. For any $k\in\bN$, define
\bea
	\nu_k&=[c_0,c_1\alpha \|\nabla f(\bar\vx_{1})\|,\cdots,c_1\alpha \|\nabla f(\bar\vx_{k-1})\| ]^\T\in\bR^{k}\\
	\phi_t&=[\theta^{t-1},\theta^{t-2},\cdots,\theta,1,0,\cdots,0]^\T\in\bR^{k}\\
	\Theta_k&=\sum_{t=1}^k\phi_t\phi_t^\T\in\bR^{k\times k}
\ena
where the elements are defined in \eqref{eq4_s3} and \eqref{eq6_s3}. Clearly, $\Theta_k$ is nonnegative and positive semi-definite. We have from \eqref{eq3_s3} that $\|X_t-\bone\bar\vx_t^\T\|_F \leq\nu_k^\T\phi_t$, and hence
\bea\label{eq4_s4}
	\sum_{t=1}^{k} \|X_t-\bone\bar\vx_t^\T\|_F^2 \leq \nu_k^\T\Theta_k\nu_k\leq \|\Theta_k\|\|\nu_k\|^2.
\ena
To bound $\|\Theta_k\|$, let $[\Theta_k]_{ij}$ be the element in the $i$-th row and $j$-th column of $\Theta_k$. For any $0<i\leq j\leq k$, we have
\bea
	&[\Theta_k]_{ij}\\
	&=\sum_{t=j-1}^{k}\theta^{t-i+1}\theta^{t-j+1}=\sum_{t=j-1}^{k}\theta^{2t-i-j+2}\\
	&=\frac{\theta^{2j-2}(1-\theta^{2(k-j+2)})}{1-\theta^2}\theta^{2-i-j}=\frac{\theta^{j-i}(1-\theta^{2(k-j+2)})}{1-\theta^2}.
\ena
Since $\Theta_k$ is symmetric, it holds that
\bea
	&\sum_{i=1}^k[\Theta_k]_{ij}=\sum_{i=1}^j[\Theta_k]_{ij}+\sum_{i=j+1}^k[\Theta_k]_{ij}\\
	&=\sum_{i=1}^j[\Theta_k]_{ij}+\sum_{i=j+1}^k[\Theta_k]_{ji}\\
	&=\sum_{i=1}^j\frac{\theta^{j-i}(1-\theta^{2(k-j+2)})}{1-\theta^2}+\sum_{i=j+1}^k\frac{\theta^{i-j}(1-\theta^{2(k-i+2)})}{1-\theta^2}\\
	&\leq\sum_{i=1}^j\frac{\theta^{j-i}}{1-\theta^2}+\sum_{i=j+1}^k\frac{\theta^{i-j}}{1-\theta^2}\\
	&\leq\frac{1}{(1-\theta)(1-\theta^2)}+\frac{\theta}{(1-\theta)(1-\theta^2)}\leq\frac{1}{(1-\theta)^2}
\ena
and we have from the Gershgorin circle theorem that
\bee\label{eq5_s5}
	\|\Theta_k\|\leq\max_j\sum_{i=1}^k[\Theta_k]_{ij}\leq\frac{1}{(1-\theta)^2}.
\ene
It then follows from \eqref{eq4_s4} that
\bea\label{eq7_s4}
&\sum_{t=1}^{k} \|X_t-\bone\bar\vx_t^\T\|_F^2\leq\frac{2}{(1-\theta)^2}\Big[c_0^2+c_1^2\alpha^2\sum_{t=1}^{k-1} \|\nabla f(\bar\vx_{t})\|^2\Big]\\
&\sum_{t=1}^{k} \|Y_t-\pi_B\hat\vy_t\|_F^2\leq\frac{2}{(1-\theta)^2}\Big[c_2^2+c_3^2\alpha^2\sum_{t=1}^{k-1} \|\nabla f(\bar\vx_{t})\|^2\Big].
\ena

{\bf Step 4: Bound $\sum_{t=1}^{k} \|\nabla f(\bar\vx_{t})\|^2$}

Combining \eqref{eq2_s2}, \eqref{eq5_s2} and \eqref{eq6_s2} implies that
\bea\label{eq3_s2}
	&f(\bar\vx_{k+1})\\
	&\leq f(\bar\vx_{k})-\alpha\nabla f(\bar\vx_{k})^\T\bar\vy_{k}+\frac{L\alpha^2}{2}\|\bar\vy_{k}\|^2\\
	&\leq f(\bar\vx_{k})-{\alpha\pi_B^\T\pi_A}\left(1-\frac{3L\alpha\pi_B^\T\pi_A}{2}\right)\|\nabla f(\bar\vx_{k})\|^2\\
	&\quad+\alpha\|\nabla f(\bar\vx_{k})\|\|Y_{k}-\pi_B\hat\vy_{k}^\T\|_F+\frac{3L\alpha^2}{2}\|Y_k-\pi_B\hat\vy_{k}^\T\|_F^2\\
	&\quad+\frac{3L^3\alpha^2}{2}\|X_{k}-\bone\bar\vx_{k}^\T\|_F^2+L\alpha\sqrt{n}\|\nabla f(\bar\vx_{k})\|\|X_{k}-\bone\bar\vx_{k}^\T\|_F.
\ena

Summing  both sides of \eqref{eq3_s2} over $1,\cdots,k$, we have
\bea\label{eq2_s4}
	&{\alpha\pi_B^\T\pi_A}\left(1-\frac{3L\alpha\pi_B^\T\pi_A}{2}\right)\sum_{t=1}^{k} \|\nabla f(\bar\vx_{t})\|^2\\
	&\leq f(\vx_0)-f(\bar\vx_k)+\alpha\sum_{t=1}^{k}\|\nabla f(\bar\vx_{t})\|\|Y_{t}-\pi_B\hat\vy_{t}^\T\|_F\\
	&\quad+\frac{3L\alpha^2}{2}\sum_{t=1}^{k} \Big(\|Y_t-\pi_B\hat\vy_t\|_F^2+L^2\|X_t-\bone\bar\vx_t^\T\|_F^2\Big)\\
	&\quad+\sum_{t=1}^{k}{L\alpha}{\sqrt{n}}\|\nabla f(\bar\vx_{t})\|\|X_{t}-\bone\bar\vx_{t}^\T\|_F
\ena
\bea
	&\leq f(\vx_0)-f^\star+\frac{3L\alpha^2(L^2c_0^2+ c_2^2)}{(1-\theta)^2}+\frac{\alpha(\sqrt{n}Lc_0+c_2)}{(1-\theta)^2}\\
	&\quad+\frac{3L\alpha^4(L^2c_1^2+ c_3^2)}{(1-\theta)^2}\sum_{t=1}^{k} \|\nabla f(\bar\vx_{t})\|^2\\
	&\quad+\frac{\alpha^2(\sqrt{n}Lc_1+c_3)}{(1-\theta)^2}\sum_{t=1}^{k} \|\nabla f(\bar\vx_{t})\|^2\\
	&\quad+{\alpha(\sqrt{n}Lc_0+c_2)}\sum_{t=1}^{k}\theta^{t-1}\|\nabla f(\bar\vx_{t})\|^2
\ena
where the last inequality follows from \eqref{eq8_s2} and \eqref{eq7_s4}.

We can move the terms related to $\sum_{t=1}^{k} \|\nabla f(\bar\vx_{t})\|^2$ in the right-hand-side of \eqref{eq2_s4} to the left-hand-side to bound $\sum_{t=1}^{k} \|\nabla f(\bar\vx_{t})\|^2$. To this end, the stepsize $\alpha$ should satisfy
\bea\label{eq_alpha}
	&\alpha<\\
	&\left(\frac{3L\pi_B^\T\pi_A}{2}+\frac{3L^3c_1^2+3L c_3^2+L\sqrt{n}(c_0+c_1)+c_2+c_3}{(1-\theta)^2\pi_B^\T\pi_A}\right)^{-1}
\ena
which is followed by
\bea\label{eq_gamma}
	\gamma &\triangleq {\alpha\pi_B^\T\pi_A}\left(1-\frac{3L\alpha\pi_B^\T\pi_A}{2}\right.\\
	&\quad\left.-\frac{\alpha(3L^3c_1^2+3L c_3^2+L\sqrt{n}(c_0+c_1)+c_2+c_3)}{(1-\theta)^2\pi_B^\T\pi_A}\right)>0.
\ena
If $\theta^k\leq \frac{\alpha}{1-\theta}$, i.e.,
\bee\label{eq_k0}
	k\geq k_0\triangleq \frac{\ln(\alpha)-\ln(1-\theta)}{\ln(\theta)},
\ene
then it follows from \eqref{eq2_s4} that
\bea
	&\gamma\sum_{t=1}^{k} \|\nabla f(\bar\vx_{t})\|^2\leq f(\vx_0)-f^\star+\frac{3L\alpha^2(L^2c_0^2+ c_2^2)}{(1-\theta)^2}\\
	&\quad+\frac{\alpha(\sqrt{n}Lc_0+c_2)}{(1-\theta)^2}+{\alpha(\sqrt{n}Lc_0+c_2)}\sum_{t=1}^{k_0}\|\nabla f(\bar\vx_{t})\|^2
\ena
Thus, we have
\bea
	&\frac{1}{k}\sum_{t=1}^{k} \|\nabla f(\bar\vx_{t})\|^2\leq \frac{f(\vx_0)-f^\star}{\gamma k}+\frac{3L\alpha^2(L^2c_0^2+ c_2^2)}{\gamma(1-\theta)^2k}\\
	&\quad+\frac{\alpha(\sqrt{n}Lc_0+c_2)(1+\sum_{t=1}^{k_0}\|\nabla f(\bar\vx_{t})\|^2)}{\gamma(1-\theta)^2k}
\ena
which is \eqref{eq1_theo1} in Theorem \ref{theo1}. The inequality \eqref{eq2_theo1}	follows from \eqref{eq7_s4} immediately.

Now we look back at \eqref{eq3_s2}.  Jointly with  \eqref{eq1_theo1}, \eqref{eq8_s2}, \eqref{eq7_s4} and \eqref{eq3_s2}, it follows from the supermartingale convergence theorem \cite[Proposition A.4.4]{bertsekas2015convex} that $f(\bar\vx_k)$ converges. If $f$ is further convex, it follows from the convergence of $\sum_{t=1}^{k}\|\nabla f(\bar\vx_t)\|^2$ that $f(\bar\vx_k)$ converges to $f^\star$.

\bibliographystyle{IEEEtran}
\bibliography{mybibf}         

\begin{IEEEbiography}
	[{\includegraphics[width=1in,height=1.25in,clip,keepaspectratio]{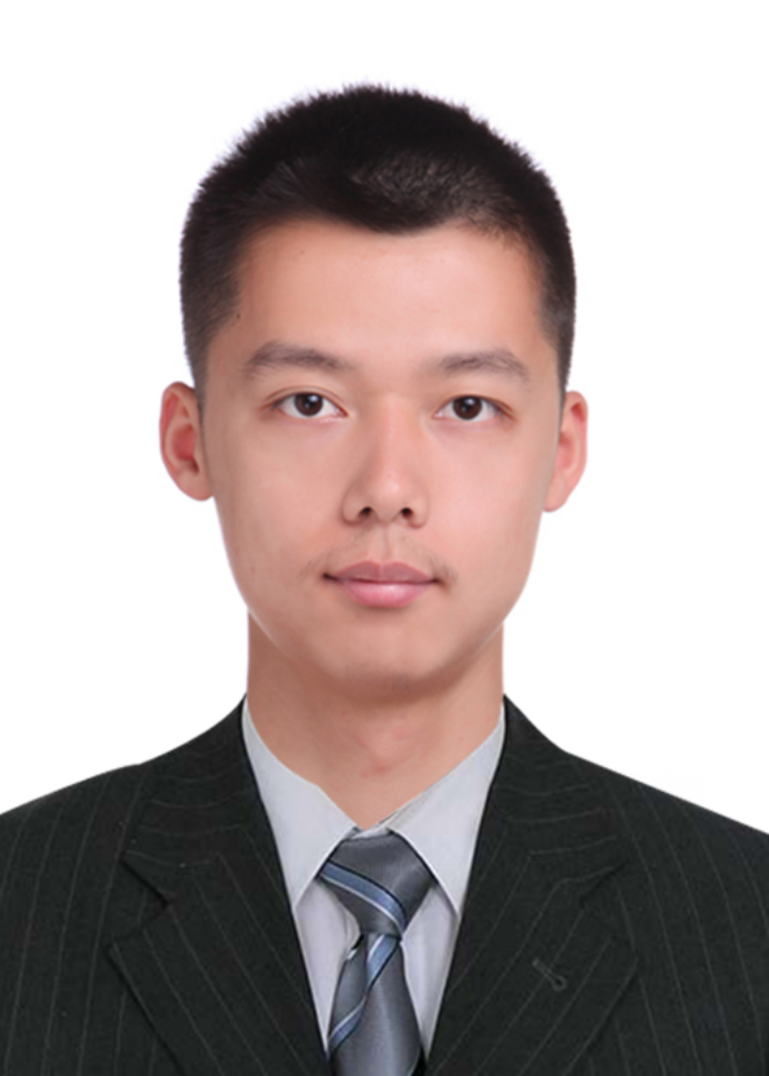}}]
	{Jiaqi Zhang} received the B.S. degree in electronic and information engineering from the School of Electronic and Information Engineering, Beijing Jiaotong University, Beijing, China, in 2016. He is currently pursuing the Ph.D. degree at the Department of Automation, Tsinghua University, Beijing, China. His research interests include networked control systems, distributed or decentralized optimization and their applications.
\end{IEEEbiography}
\begin{IEEEbiography}
	[{\includegraphics[width=1in,height=1.25in,clip,keepaspectratio]{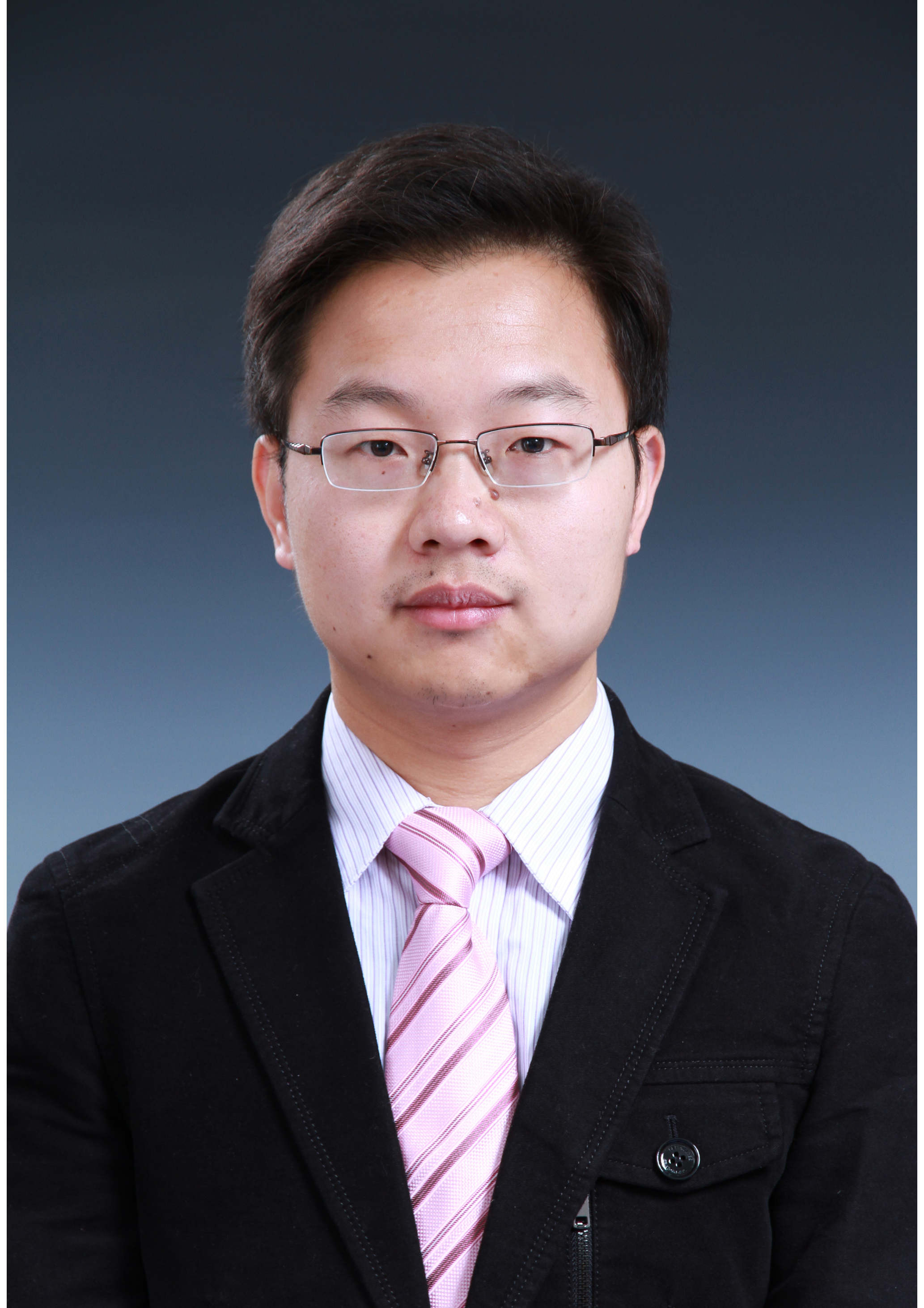}}]
	{Keyou You} (SM'17)  received the B.S. degree in Statistical Science from Sun Yat-sen University, Guangzhou, China, in 2007 and the Ph.D. degree in Electrical and Electronic Engineering from Nanyang Technological University (NTU), Singapore, in 2012. After briefly working as a Research Fellow at NTU, he joined Tsinghua University in Beijing, China where he is now a tenured Associate Professor in the Department of Automation. He held visiting positions at Politecnico di Torino,  Hong Kong University of Science and Technology,  University of Melbourne and etc. His current research interests include networked control systems, distributed optimization and learning, and their applications.

	Dr. You received the Guan Zhaozhi award at the 29th Chinese Control Conference in 2010 and the ACA (Asian Control Association) Temasek Young Educator Award in 2019. He was selected to the National 1000-Youth Talent Program of China in 2014 and received the National Science Fund for Excellent Young Scholars in 2017. He is serving as an Associate Editor for the IEEE Transactions on Cybernetics,  IEEE Control Systems Letters(L-CSS), Systems \& Control Letters.
\end{IEEEbiography}
\begin{IEEEbiography}
	[{\includegraphics[width=1in,height=1.25in,clip,keepaspectratio]{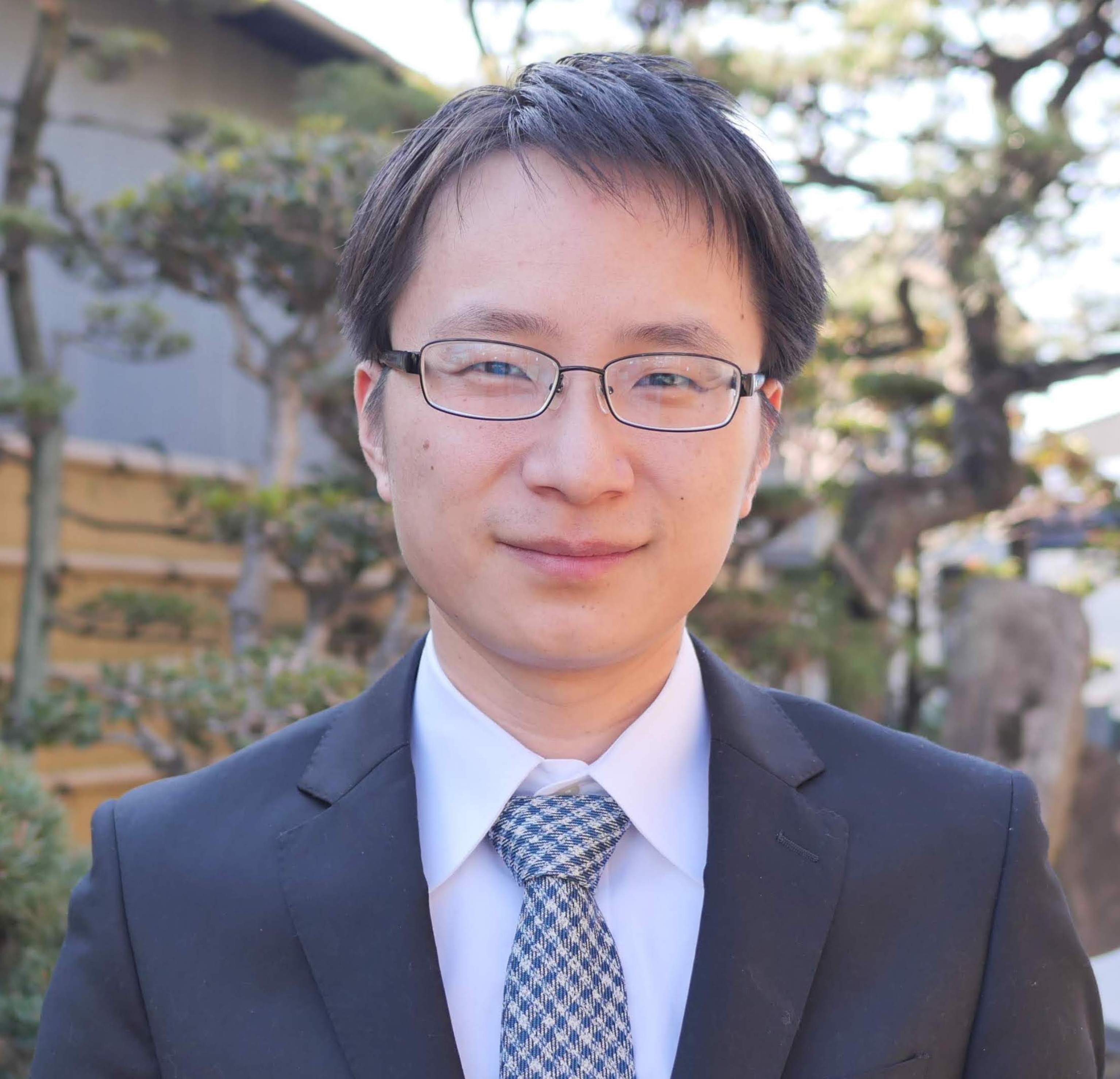}}]
	{Kai Cai} (S'08-M'12-SM'17) received the B.Eng. degree in Electrical Engineering from Zhejiang University, Hangzhou, China, in 2006; the M.A.Sc. degree in Electrical and Computer Engineering from the University of Toronto, Toronto, ON, Canada, in 2008; and the Ph.D. degree in Systems Science from the Tokyo Institute of Technology, Tokyo, Japan, in 2011. He is currently an Associate Professor at Osaka City University. Previously, he was an Assistant Professor at the University of Tokyo (2013-2014), and a postdoctoral Fellow at the University of Toronto (2011-2013).
	Dr. Cai's research interests include distributed control of discrete-event systems and cooperative control of networked multi-agent systems. He is the co-author (with W.M. Wonham) of Supervisory Control of Discrete-Event Systems (Springer 2019) and Supervisor Localization (Springer 2016). He is serving as an Associate Editor for the IEEE Transactions on Automatic Control.
\end{IEEEbiography}

\end{document}